\newcites{SM}{Supplementary References}
\newtheorem{thm}{Theorem}
\newtheorem{cor}{Corollary}
\newtheorem{prop}{Proposition}
\DeclareMathOperator{\tr}{Tr}
\newcommand{\argmin}[1]{\underset{#1}{\operatorname{arg}\,\operatorname{min}}\;}
\newcommand{\e}{{\epsilon}}
\newcommand{\n}{{\bf n}}
\newcommand{\uvec}{{\bf u}}
\newcommand{\vvec}{{\bf v}}
\newcommand{\w}{{\bf w}}
\newcommand{\x}{{\bf x}}
\newcommand{\y}{{\bf y}}
\newcommand{\z}{{\bf z}}
\newcommand{\A}{{\bf A}}
\newcommand{\B}{{\bf B}}
\newcommand{\C}{{\bf C}}
\newcommand{\F}{{\bf F}}
\newcommand{\I}{{\bf I}}
\newcommand{\M}{{\bf M}}
\newcommand{\Q}{{\bf Q}}
\newcommand{\R}{\mathbb{R}}
\newcommand{\U}{{\bf U}}
\newcommand{\V}{{\bf V}}
\newcommand{\W}{{\bf W}}
\newcommand{\X}{{\bf X}}
\newcommand{\Y}{{\bf Y}}
\newcommand{\Z}{{\bf Z}}
\newcommand{\bSig}{\boldsymbol{\Sigma}}
\newcommand{\bmu}{\boldsymbol{\mu}}
\renewcommand{\v}{{\bf v}}
\newcommand{\bXi}{\boldsymbol{\Xi}}
\title{Constrained Predictive Coding as a Biologically Plausible Model of the Cortical Hierarchy}
\author{%
    Siavash Golkar$\thanks{Equal contribution}\hspace{3pt}^{\,1}$\\
    \And Tiberiu Tesileanu$\protect\footnotemark[1]\hspace{3pt}^{\,1}$\\
    \And Yanis Bahroun$^{\,1}$\\
    \And Anirvan M.~Sengupta$^{\,2,3,4}$\\
    \And Dmitri Chklovskii$^{\,1,5}$\vspace{14pt}\\
$^{1\,}$Center for Computational Neuroscience, Flatiron Institute
\\
$^{2\,}$Center for Computational Mathematics, Flatiron Institute
\\
$^{3\,}$Center for Computational Quantum Physics, Flatiron Institute
\\
$^{4\,}$Department of Physics and Astronomy, Rutgers University
\\
$^{5\,}$Neuroscience Institute, NYU Medical Center
\vspace{5pt}\\
\texttt{\{sgolkar,ttesileanu,ybahroun,dchklovskii\}@flatironinstitute.org}\\
\texttt{anirvans.physics@gmail.com }}
\begin{document}

\maketitle

\begin{abstract}
Predictive coding (PC) has emerged as an influential normative model of neural computation, with numerous extensions and applications. As such, much effort has been put into mapping PC faithfully onto the cortex, but there are issues that remain unresolved or controversial. In particular, current implementations often involve separate value and error neurons and require symmetric forward and backward weights across different brain regions. These features have not been experimentally confirmed. In this work, we show that the PC framework in the linear regime can be modified to map faithfully onto the cortical hierarchy in a manner compatible with empirical observations. By employing a disentangling-inspired constraint on hidden-layer neural activities, we derive an upper bound for the PC objective. Optimization of this upper bound leads to an algorithm that shows the same performance as the original objective and maps onto a biologically plausible network. The units of this network can be interpreted as multi-compartmental neurons with non-Hebbian learning rules, with a remarkable resemblance to recent experimental findings. There exist prior models which also capture these features, but they are phenomenological, while our work is a normative derivation. Notably, the network we derive does not involve one-to-one connectivity or signal multiplexing, which the phenomenological models required, indicating that these features are not necessary for learning in the cortex. The normative nature of our algorithm in the simplified linear case allows us to prove interesting properties of the framework and analytically understand the computational role of our network's components. The parameters of our network have natural interpretations as physiological quantities in a multi-compartmental model of pyramidal neurons, providing a concrete link between PC and experimental measurements carried out in the cortex.
\end{abstract}


\section{Introduction}\label{sec:intro}

Over the past decades, predictive coding (PC), a normative framework for learning representations that maximize predictive power, has played an important role in computational neuroscience~\cite{Rao1999,Friston2005}. Initially proposed as an unsupervised learning paradigm in the retina~\cite{Srinivasan}, it has since been expanded to the supervised regime~\cite{Whittington2017} with arbitrary graph topologies~\cite{PConarbitrarygraphs,DGPC}. 
The PC framework has been analyzed in many contexts~\cite{predictivebalance1,predictivebalance2} and has found many applications, from clinical neuroscience~\cite{clinicalPCN, HabNRef} to memory storage and retrieval~\cite{AssociativePC}.  We refer the reader to \cite{PCreview,Millidgereview} for recent reviews.

\begin{figure}[!tbh]
    \center\includegraphics{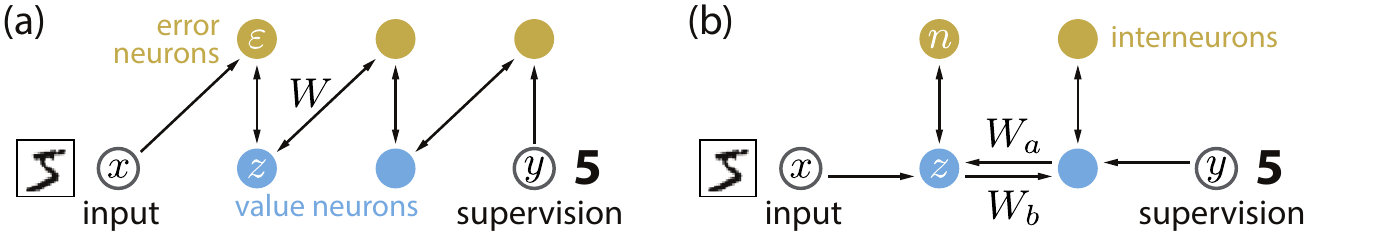}
    \caption{Schematic architecture of the predictive coding network (PCN) and our covariance-constrained network (BioCCPC). (a) PCN from~\cite{Whittington2017}, figure adapted from~\cite{Millidgereview}. The intra-layer connectivity is to be one-to-one, while the inter-layer connectivity is symmetric. (b) Our BioCCPC network. There is no requirement for symmetric weights across layers, and the connectivity within layers can be diffuse. \label{fig:schematic}}
\end{figure}
Predictive coding is viewed as a possible theory of cortical computation, and many parallels have been drawn with the known neurophysiology of cortex~\cite{PCreview}. While the initial works proposed a biologically plausible network~\cite{Rao1999,Friston2005}, the connection with cortex was more closely examined in~\cite{Bastos2012}, where the PC module was mapped onto a cortical-column microcircuit. However, there are aspects of this mapping that have proved controversial. Among these are the requirement of multiple redundant cortical operations, the symmetric connectivity pattern, the one-to-one connectivity of value and error neurons, and also the requirement that feedback connections be inhibitory~\cite{PCreview,Heilbron,Kogo}, as sketched in Fig.~\ref{fig:schematic}a . The presence of separate error and value neurons has itself been called into question~\cite{Heilbron}.

The PC-based neural circuits also do not account for more recent experimental findings highlighting the details of computation in the cortex~\cite{Spruston2008,Larkum2013,sjostrom2006,golding2002,gambino2014sensory,pyramidal_review,bittner2015,bittner2017behavioral,MageeGrienberger2020,hardie2009synaptic}. For example, the learning dynamics of multi-compartmental pyramidal neurons has been closely investigated. In particular, it was observed that the plasticity of the synapses of the basal compartment is driven by the activity in the apical tuft of the neuron by so-called calcium plateau potentials~\cite{golding2002,sjostrom2006,bittner2017behavioral}, leading to non-Hebbian learning rules~\cite{MageeGrienberger2020}.
%
%
%
These experiments have motivated the development of several models of microcircuits with multi-compartmental neurons~\cite{kording2001supervised,Urbanczik2014,guergiuev2016deep,Sacramento2018,Payeur2020,Gidon2020,Milstein2020}. In a number of cases, it has been shown that these models can replicate learning similar to the backpropagation algorithm under specific assumptions~\cite{backprop,goodfellow2016deep,Sacramento2018}. However, because of their rather phenomenological nature, detailed analysis of these models is in many cases challenging and one must resort to numerical simulations, rendering the task of understanding the role of various neurophysiological quantities difficult. To this date, a normative framework (PC or otherwise) that can explain these experimental findings is still lacking.

In this work, we show how the PC framework can be made compatible with the aforementioned experimental observations. Inspired by prior work which explored the effects of finding decorrelated representations~\cite{choi2005blind,pehlevan2019neuroscience,wanner2020whitening,DesjardinsNG,Fujimoto_2018,Generalized_Whitening_Neural_Net}, we add a decorrelating inequality constraint to the covariance of the latent representations of PC. Using this constraint, we derive an upper bound for the PC objective.  By working in the linear regime, we can prove interesting properties of our algorithm. We show that the learning algorithm derived from this upper bound does not suffer from the issues of prior implementations and naturally maps onto the hierarchical structure of the cortical pyramidal neurons.

\paragraph{Contributions}
\begin{itemize}\vspace{-2mm}
\setlength\itemsep{-0.05em}
    \item By imposing a decorrelation-inspired inequality constraint on the latent space, we find an analytic upper bound to the PC objective.
    \item We introduce BioCCPC, a novel biologically plausible algorithm derived from this upper bound and show that it closely matches the known physiology of the cortical hierarchy.
    \item We interpret the different parameters of the algorithm in terms of the conductances and leaks of the separate compartments of the pyramidal neuron. We find that the neural compartmental conductances encode the varainces of the PC framework, and the somatic leak maps onto a thresholding mechanism of the associated eigenvalue problem.
\end{itemize}

\section{Related work and review of predictive coding}


The backpropagation algorithm~\cite{Werbos:74, backprop} is the predominant tool in the training of deep neural networks but is generally not considered biologically plausible~\cite{crick1989recent}. Over the years, many authors have explored biologically plausible approximations or alternatives to backpropagation~\cite{LeCun1986,Lecun1987,Kolen1994,Lee2014,Lillicrap2016,Scellier2017, Sacramento2018, Krotov2019,Akrout2019,Guerguiev2019,Payeur2020, Ernoult2020,signsymmetry,RRR,RRR_workshop,differentialTP,DendriticComputing} (for a more complete review see~\cite{lillicrap2020backpropreview}). These approaches generally fall into two categories. First are normative approaches, such as Predictive Coding~\cite{Rao1999,Friston2005,Srinivasan}, Target Propagation and variations~\cite{LeCun1986,Lecun1987,Lee2014}, Equilibrium Propagation~\cite{Scellier2017, Ernoult2020} and others~\cite{signsymmetry,Akrout2019,RRR,RRR_workshop}, where one starts from a mathematically motivated optimization objective. These methods, by virtue of their normative derivation, have a firm grounding in theory; however, they do not fully conform to the experimental observations in the brain (see below and Sec.~\ref{sec:bio}). 
%
The second approach is driven by biology, with network structures and learning rules inspired by experimental findings~\cite{Urbanczik2014, Lillicrap2016,Sacramento2017, Sacramento2018,Payeur2020}. While these works mostly conform to the experimentally observed findings, they are more challenging to analyze because of their conjectured phenomenological nature.

The goal of the present paper is \emph{not} to propose yet another biologically plausible alternative to backpropagation. It is rather to demonstrate that the normative framework of predictive coding, when combined with a constraint, can indeed closely match experimental observations. For this reason, in this work, we focus on comparing our method with previous implementations of predictive coding and do not concern ourselves with other biologically plausible alternatives to backpropagation. The relationship between the PC framework and backpropagation was explored in~\cite{Whittington2017,PCarbitraryPoor,PCexactBackprop,salvatori-PCbackprop,rosenbaum2021}. The advantages of PC over backpropagation were highlighted in~\cite{Millidgereview}.

\paragraph{Notation.}Bold upper case $\M$ and lower case variables $\v$ denote matrices and vectors, respectively. By upper case letter $\X,\Y,\Z$, we denote data matrices in $\R^{d\times T}$, where $d$ and $T$ are the dimensions of the relevant variable and the number of samples. Lower case $\x,\y,\z$, denote the relevant quantities of a single sample, and $\x_t,\y_t,\z_t$, denote the $t^\text{th}$ sample. $\|\M \|^2_F$ denotes the Frobenius norm of \M.

\subsection{Review of predictive coding}\label{sec:PC}


\paragraph{Probabilistic model.} 
In this section, we review the supervised predictive coding algorithm~\cite{Whittington2017}. The derivation starts from a probabilistic model for supervised learning, which parallels the architecture of an artificial neural network (ANN) with $n+1$ layers. In this model, the neurons of each layer are random variables $\z^{(l)}$ (denoting the vector of activations in the $l^\text{th}$ layer) with layers $0$ and $n$, respectively, denoting the input and output layers of the network.
We assume that the joint probability of the latent variables factorizes in a Markovian structure
\begin{equation*}
    p(\z^{(0)}, \z^{(1)}, \cdots , \z^{(n)})= p(\z^{(n)}|\z^{(n-1)})\times p(\z^{(n-1)}|\z^{(n-2)})\times \cdots \times p(\z^{(0)})
\end{equation*}
with the relationship between the random variables of adjacent layers given by:
\begin{equation}
    p\bigl(\z^{(l)}\vert\z^{(l-1)}\bigr)=\mathcal N\bigl(\z^{(l)};\bm \mu^{(l)},\bSig^{(l)}\bigr)\,, \qquad \text {with}\quad \bmu^{(l)} =  \W^{(l-1)} f(\z^{(l-1)}), \quad \bSig^{(l)} = \sigma^{(l)2} \I\,.
\end{equation}
The mean of the probability density on layer $l$ mirrors the activity of the analogous ANN given by
    $\bmu^{(l)} =  \W^{(l-1)} f(\z^{(l-1)}),$
where 
$\W^{(l-1)}$ are the weights connecting layers $l-1$ and $l$.
The objective function is then given by the negative log-likelihood of the joint distribution function:
\begin{align}\label{eq:WB_obj}
    L = -\sum_t \log p(\z_t^{(0)},\dotsc,\z_t^{(n)}) &= \frac12 \sum_{l} \frac{\bigl\| \Z^{(l)}-\W^{(l-1)} f(\Z^{(l-1)})\bigr\|_F^2}{\sigma^{(l)2}} + \text{const}~,
\end{align}
where we have switched to the data matrix notation for brevity and assumed the variances $\sigma^{(l)2}$ are fixed hyperparameters. In the following, we refer to $L$, Eq.~\eqref{eq:WB_obj},  without the constant term.


\paragraph{Learning.} Learning takes place in two steps. First, the values of the random variables are determined by finding the most probable configuration of the joint distribution function when both the input and output layers are conditioned on the given input and output ($\z^{(0)} = \x$ and $\z^{(n)} = \y$):
\begin{equation}\label{eq:Step_1_PC}
    \z^{*(1)},\dotsc,\z^{*(n-1)} = \argmin{\z^{(1)},\dotsc,\z^{(n-1)}}  L(\z^{(0)}= \x,\z^{(n)} = \y).
\end{equation}
The solution to this minimization problem can be found via gradient descent, which we evaluate component-wise for clarity as: 
\begin{equation}
    \dot z^{(l)}_j = -\eta \partial_{z_j} L = \eta( -\varepsilon^{(l)}_j + \sum_i \varepsilon^{(l+1)}_i W_{ij}^{(l)}f'( z_j^{(l)}) ),\quad \text{with} \quad 
    \varepsilon^{(l)}_i = \frac{ z_i^{(l)}-\mu_i^{(l)}}{\sigma^{(l)2}}\,,
\end{equation}
where $\eta$ is the gradient descent step size. The second step is then to minimize the objective with respect to the weights while keeping the previously obtained neuron values fixed. This corresponds to optimizing the value of the loss at the MAP estimate and can also be carried out by gradient descent. 
This algorithm can be implemented by a biologically plausible network as described in~\cite{Whittington2017}; see Figure~\ref{fig:schematic}a. However, as discussed in Section~\ref{sec:intro}, its mapping onto the cortex has proved controversial. For further details regarding these steps see \cite{Whittington2017}.

\section{A constrained predictive coding framework}

In this section, we introduce and discuss our novel, covariance-constrained predictive coding (CCPC) model within the supervised PC paradigm of~\cite{Whittington2017}. Our model also straightforwardly extends to the unsupervised learning paradigm discussed in~\cite{Rao1999,Friston2005}. For simplicity we work in the linear regime~($f(x)=x$) which allows us to prove different properties of our framework.




\subsection{Derivation of upper bound objective}

\paragraph{Reduction to a sum of objectives.}
We start by reducing the optimization problem, Eq.~\eqref{eq:WB_obj}, into a set of overlapping sub-problems, which will allow us to break the symmetry between feedforward and feedback weights. 
To do so, we first introduce a copy of the terms containing the weights $\W^{(1)}$ to $\W^{(n-2)}$, denoted by $\W_a^{(l)}$ and $\W_b^{(l)}$ respectively, as 
\begin{equation}
    \label{eq:obj_doubled}
    \begin{split}
    \min_{\Z,\W} L &= \min_{\Z,\W} \frac 12\sum_{l=1}^{n} \frac{\bigl\|\Z^{(l)}-\W^{(l-1)} \Z^{(l-1)}\bigr\|_F^2}{\sigma^{(l)2}} = \min_{\Z,\W_a,\W_b}  \hat L\\
    \hat L &= \min_{\Z,\W_a,\W_b}  \frac 12\frac{\bigl\|\Z^{(1)}-\W_b^{(0)} \Z^{(0)}\bigr\|_F^2}{\sigma^{(1)2}} + 
    \frac 12\frac{\bigl\|\Z^{(n)}-\W_a^{(n-1)} \Z^{(n-1)}\bigr\|_F^2}{\sigma^{(n)2}}\\ 
    & \qquad + \frac14  \sum_{l=2}^{n-1} \left[ \frac{\bigl\|\Z^{(l)}-\W_b^{(l-1)} \Z^{(l-1)})\bigr\|_F^2}{\sigma^{(l)2}} + 
    \frac{\bigl\|\Z^{(l)}-\W_a^{(l-1)} \Z^{(l-1)})\bigr\|_F^2}{\sigma^{(l)2}}\right].
    \end{split}
\end{equation}
For consistency, we rename $\W^{(0)}$, $\W^{(n-1)}$ to $\W_b^{(0)}$, $\W_a^{(n-1)}$, respectively.
Introducing these copies does not change the optimization\footnote{This can be directly verified by finding the optima for $\W$'s before and after the change. We have $\W^{(l)} = \W_a^{(l)} = \W_b^{(l)} = \Z^{(l+1)}\Z^{(l)\top} (\Z^{(l+1)}\Z^{(l)\top})^{-1}$. Plugging these back into Eq.~\eqref{eq:obj_doubled} we see that the equality holds. However, in the next step, since we treat $\W_a$'s and $\W_b$'s differently, $\W_a = \W_B$ will no longer hold.} but will help us avoid weight sharing in the steps below. We now pair the terms two by two as 
\begin{align}
    \label{eq:deep_paired}
    \min_{\Z,\W_a,\W_b} \hat L = \frac 12\sum_{l=1}^{n-1} \left[g_b^{(l)} \bigl\|\Z^{(l)}-\W_b^{(l-1)} \Z^{(l-1)}\bigr\|_F^2 + g_a^{(l)}\bigl\|\Z^{(l+1)}-\W_a^{(l)} \Z^{(l)}\bigr\|_F^2\right]~,
\end{align}
with $g_a^{(n-1)} = 1 / \sigma^{(n)2}$, $g_b^{(1)} = 1 / \sigma^{(1)2}$, and $g_a^{(l-1)} = g_b^{(l)} = 1 / (2\sigma^{(l)2})$ for $l = 2, \dotsc, n-1$. 

Weight sharing occurs here from terms like $\z^{(l+1)\top} \W \z^{(l)}$, obtained from expanding the squared norms in Eq.~\eqref{eq:deep_paired}. Indeed, the gradient descent dynamics with respect to $\z^{(l+1)}$ (resp. $\z^{(l)}$) leads to terms of the form $\W \z^{(l)}$ in $\dot \z^{(l+1)}$ (resp. $\W^\top \z^{(l+1)}$ in $\dot \z^{(l)}$), which use the same weights $\W$. Thanks to the doubling of the weights, we can avoid this problem by optimizing each term in the sum in Eq.~\eqref{eq:deep_paired} separately. In other words,
\begin{gather}
    \label{eq:deep_split}
    \min_{\Z,\W_a,\W_b} \hat L  \leq \sum_{l=1}^{n-1} \min_{\Z^{(l)},\W_a^{(l)},\W_b^{(l-1)}} \hat L^{(l)}~,\\
    \text{where} ~~\hat L^{(l)}\equiv  \frac 12\sum_{l=0}^{n} \left[g_b^{(l)} \bigl\|\Z^{(l)}-\W_b^{(l-1)} \Z^{(l-1)}\bigr\|_F^2 + g_a^{(l)}\bigl\|\Z^{(l+1)}-\W_a^{(l)} \Z^{(l)}\bigr\|_F^2\right].\notag
\end{gather}
This equality holds simply because we are no longer finding the minimum of the full objective $\hat L$. We are instead finding the minimum of each component separately, and then evaluating $\sum_l \hat L^{(l)} = \hat L$. 
This splits the $(n+1)$-layer optimization problem into a set of $3$-layer optimizations, in each of which only the middle layer is being optimized. Note, however, that these are overlapping, so the different optimization problems need to be solved self-consistently. We make this precise in the supplementary materials section and show that it provides an upper bound for our objective $L$ (SM Sec.~\ref{app:upperbound}).
Separating the objective function in this manner eliminates the weight sharing problem for $\W_b$, but the problem remains for $\W_a$. We address this problem in the following.


\paragraph{Whitening constraint.} 

The idea of decorrelating internal representation has been widely used for unsupervised tasks, often motivated by neuroscience~\cite{choi2005blind,pehlevan2019neuroscience,wanner2020whitening}. 
In the case of deep learning, the main motivations were improved convergence speed and generalization~\cite{DesjardinsNG,Fujimoto_2018,Generalized_Whitening_Neural_Net}. 
Decorrelation has also been used to circumvent the weight transport problem~\cite{RRR}.
Inspired by these observations, we introduce the constraint  $\frac 1T \Z^{(l)} \Z^{(l)\top} \preceq \I$, imposing an upper bound on the eigenvalues of the covariance matrix. The inequality can be implemented by using a positive-definite Lagrange multiplier $\Q^{(l)\top}\Q^{(l)}$ (for details see SM Sec. \ref{app:addition}):
\begin{equation}
    \label{eq:linear_final}
    \begin{split}
    \min  \hat L^{(l)} &\leq 
    \min_{\Z^{(l)},\W_a^{(l)},\W_b^{(l-1)}}\max_{\Q^{(l)}} \frac 12\biggl[g_b^{(l)} \Bigl\|\Z^{(l)}-\W_b^{(l-1)}\Z^{(l-1)}\Bigr\|_F^2 
    + g_a^{(l)}  T \, \W_a^{(l)\top} \W_a^{(l)} \\
    &
    + g_a^{(l)}   \tr \Bigl(- 2 \Z^{(l+1)\top}\W_a^{(l)}\Z^{(l)}\Bigr)  
     + \tr \Q^{(l)\top}{\Q^{(l)}}\bigl({\Z^{(l)}}\Z^{(l)\top} - T\I\bigr)+ c^{(l)} \bigl\|\Z^{(l)}\bigr\|_F^2 
     \biggr]\,,
    \end{split}
\end{equation}
where we have added an additional quadratic term in $\Z$ as a regularizer. 


\subsection{Neural dynamics and learning rules.} Similar to the PCN of \cite{Whittington2017}, the dynamics of our network during learning proceeds in two steps. First the neural dynamics is derived by taking gradient steps of $\hat L^{(l)}$ from Eq.~\eqref{eq:linear_final} with respect to $\z^{(l)}$:
\begin{align}\label{eq:lin_neural_dynamics}
    \dot \z^{(l)} = g_b^{(l)} \W_b^{(l-1)}\z^{(l-1)}+g_a^{(l)} \W_a^{(l)\top}\z^{(l+1)} -  \bigl(g_b^{(l)}+c^{(l)}\bigr)\z^{(l)}  
    -  g_a^{(l)}\Q^{(l)\top} \n^{(l)},
\end{align}
where we have defined the variables $\n^{(l)} = (1 / g_a^{(l)}) \Q^{(l)} \z^{(l)}$ for each layer of the network.


The weight updates are derived via stochastic gradient descent of the loss given in Eq.~\eqref{eq:linear_final} after the neural dynamics have reached equilibrium. These are given by
\begin{subequations}\label{eq:learning}
\begin{align}
    \delta  \W_b^{(l)} &\propto
    \Big[g_a^{(l+1)} \bigl(\W_a^{(l+1)\top}\z^{(l+2)} - \Q^{(l+1)\top}\n^{(l+1)}\bigr) - c^{(l+1)}  {\z^{(l+1)}} \Big] {\z^{(l)\top}} \label{eq:W_b}\,,\\
    \delta  \W_a^{(l)} &\propto {\z^{(l+1)}} \z^{(l)\top} -  \W_a^{(l)}\label{eq:W_a}\,,\\
    \delta \Q^{(l)} &\propto \n^{(l)}\z^{(l)\top}-\Q^{(l)}\,.\label{eq:Q}
\end{align}
\end{subequations}
We used the neural dynamics equilibrium equation for $\z^{(l)}$ to simplify the weight update for $\W_b^{(l)}$. 
This yields our online algorithm (Alg.~\ref{alg:constraint_PC}), with the architecture shown in Fig.~\ref{fig:schematic}b. The algorithm can be implemented in a biologically plausible neural network as in Fig.~\ref{fig:network}; see Sec.~\ref{sec:bio}.

\begin{figure}[H]\vspace{-10pt}
\centering
\scalebox{0.90}{
  \begin{minipage}[t]{\textwidth}
\begin{algorithm}[H]\vspace{2pt}
  \caption{\hspace{-2.5pt}\textbf{: } Covariance Constrained predictive coding algorithm (BioCCPC)
}
  \label{alg:constraint_PC}
\begin{algorithmic}
  \State \textbf{input:} $\x_t$, $\y_t$ \Comment{new sample and previous weight matrices}\\\smallskip
  \hphantom{\textbf{input:}} $\W_a^{(l)}, \W_b^{(l-1)}, \Q^{(l)} \qquad \forall l \in \{1,\ldots,n-1\}$ 
  \smallskip\smallskip
  
  \State $\z_t^{(0)} \gets \x_t, \; \z_t^{(l)} \gets \W_b^{(l-1)}\z_t^{(l-1)}$ \Comment{initialize latents via forward pass}\smallskip
      \State  \textbf{run until convergence for all $l \in \{1,n-1\}$ }\Comment{neural dynamics}\smallskip\smallskip
      \State \qquad $\n^{(l)} \gets (1 / g_a^{(l)}) \Q^{(l)} \z_t^{(l)}$  
      \smallskip
      \State \qquad $\v_a^{(l)} \gets \W_a^{(l)\top}\z_t^{(l+1)} -  \Q^{(l)\top}\n^{(l)}$ \; , 
      %
      \qquad $ \v_b^{(l)} \gets \W_b^{(l-1)}\z_t^{(l-1)}$  
      \smallskip
      \State \qquad $ \z_t^{(l)} \gets \z_t^{(l)} + \tau^{-1}\left[-g_{\text{lk}}^{(l)}\z_t^{(l)}+g_{a}^{(l)}(\v_a^{(l)}-\z_t^{(l)})+g_{b}^{(l)}(\v_b^{(l)}-\z_t^{(l)}) \right]$  
      \bigskip
      \State \textbf{update  for all $l \in \{1,n-1\}$ }\Comment{synaptic weight updates} \smallskip\smallskip
      \State \qquad  $\W_a^{(l)} \gets \W_a^{(l)}+\eta\bigl(\z_t^{(l+1)} \z_t^{(l)\top} -  \W_a^{(l)} \bigr)$  
      \smallskip
      \State \qquad $\W_b^{(l-1)} \gets \W_b^{(l-1)}+ \eta \Big[g_a^{(l)} \bigl(\W_a^{(l)\top}\z_t^{(l+1)} - \Q^{(l)\top}\n^{(l)}\bigr) - c^{(l)}  {\z_t^{(l)}} \Big] {\z_t^{(l-1)\top}} $  
      \smallskip
      \State \qquad $\Q^{(l)} \gets \Q^{(l)}+\eta \bigl( \n^{(l)}\z_t^{(l)\top}-\Q^{(l)}\bigr) $  
      \bigskip
  \State \textbf{output:}  $\z_t^{(l)} \qquad \forall l \in \{0,\ldots,n\}$ \Comment{internal representations and updated weights}\smallskip\\
  \hphantom{\textbf{output: }}$\W_a^{(l)}, \W_b^{(l-1)}, \Q^{(l)} \qquad \forall l \in \{1,\ldots,n-1\}$ 
\end{algorithmic}
\end{algorithm}
\end{minipage}
}
\end{figure}

\section{Biological implementation and comparison with experimental observations}\label{sec:bio}

In this section, we introduce a biologically plausible neural circuit that implements the CCPC algorithm, denoted by BioCCPC. We also demonstrate that the details of this circuit resemble the neurophysiological properties of pyramidal cells in the neocortex and the hippocampus. 

\subsection{Neural architecture}

The algorithm for BioCCPC (Alg.~\ref{alg:constraint_PC}) summarized by the neural dynamics from Eq.~\eqref{eq:lin_neural_dynamics} and weight update rules from Eqs.~\eqref{eq:learning} can be implemented by a neural circuit with schematic shown in Fig.~\ref{fig:network}.

The activity of the $n-1$ hidden layers, $\{\z^{(l)}\}_{l=1}^{n-1}$, is encoded as the outputs of $n-1$ sets of neurons, representing pyramidal neurons of different cortical regions.  The matrices $\W_a^{(l)}$ and $\W_b^{(l)}$ are encoded as the weights of synapses between the pyramidal neurons of adjacent layers. Explicitly, the matrix $\W_{a}^{(l)}$ (resp.\ $\W_{b}^{(l-1)}$) is the efficacy of the synapse connecting $\z^{(l+1)}$ (resp.\ $\z^{(l-1)}$) to the pyramidal neurons $\z^{(l)}$. Because of the disjoint nature of these two inputs, we model these as synapsing respectively onto the distal (apical tuft) and proximal (mostly basal) dendrites of the pyramidal neurons, respectively; see~Fig.~\ref{fig:network}. This is reminiscent of cortical pyramidal neurons, which also have two integration sites, the proximal compartment comprised of the basal and proximal apical dendrites providing inputs to the soma, and the distal compartment comprised of the apical dendritic tuft~\cite{Larkum756,pyramidal_review}. These two compartments receive excitatory inputs from two separate sources~\cite{takahashimagee,Larkum2013}. 

Similarly, the auxiliary variables $\n^{(l)}$ are represented by the activity of interneurons in each cortical region.\footnote{There are multiple types of interneurons targeting pyramidal cells~\cite{klausberger2003brain,riedemann2019diversity}. The interneurons of BioCCPC most closely resemble the somatostatin-expressing interneurons, which preferentially inhibit the apical dendrites.}  The $\Q^{(l)}$ synaptic weights are encoded in the weights of synapses connecting $\n^{(l)}$ to $\z^{(l)}$, while $\Q^{(l)\top}$ models the weights of synapses from $\z^{(l)}$ to $\n^{(l)}$. In a biological setting, the implied equality of weights of synapses from $\z^{(l)}$ to $\n^{(l)}$ and the transpose of those from $\n^{(l)}$ to $\z^{(l)}$ can be guaranteed approximately by application of the same Hebbian learning rule~\cite{RRR}. However, note that our learning rules, unlike previous work, do not require interneurons of one layer to be connected to the pyramidal neurons of another layer.

\begin{figure}[t]
  \begin{center}
    \includegraphics[width=0.9\textwidth]{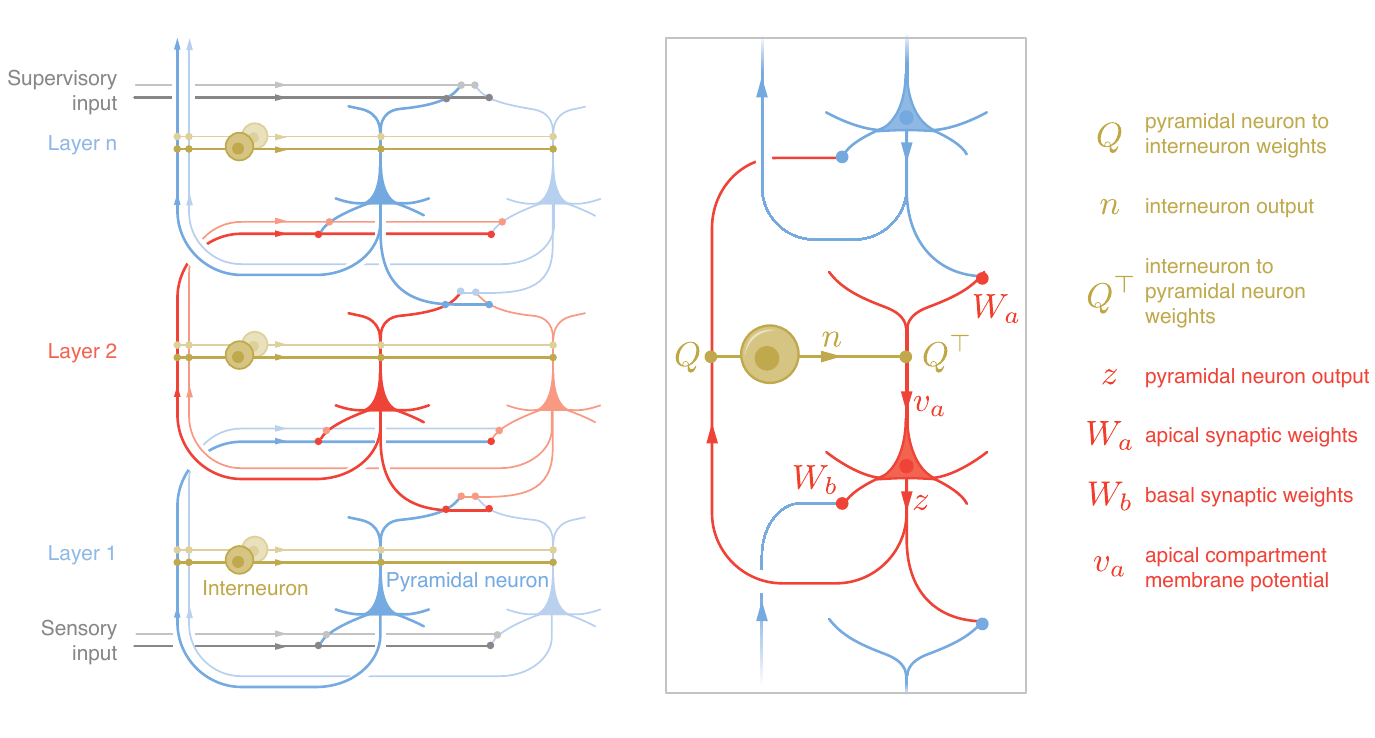}
  \end{center}
  \caption{Schematic of the biological implementation of the BioCCPC algorithm. Left: overall connectivity pattern, showing all-to-all connections between pyramidal neurons in consecutive layers, and between pyramidal and interneurons in the same layer. Right: zoom in on one neuron and its immediate neighbors, showing the membrane potentials and synaptic weights that are relevant for our algorithm.}
  \label{fig:network}
\end{figure}

\subsection{Neural dynamics and canonical components} 

The neural dynamics of the pyramidal neurons of our circuit given by Eq.~\eqref{eq:lin_neural_dynamics} can be recast as the dynamics of a three compartment neuron:
\begin{align}\label{eq:bio_neural_dynamics}
    \tau \dot \z^{(l)} =-g^{(l)}_{\text{lk}}\z^{(l)}+g_{a}^{(l)}(\v_a^{(l)}-\z^{(l)})+g_{b}^{(l)}(\v_b^{(l)}-\z^{(l)}),
\end{align}
where we have defined the apical and basal compartmental membrane potentials
\begin{equation}
    \v_a^{(l)}= \W_a^{(l)\top}\z^{(l+1)} -  \Q^{(l)\top}\n^{(l)}\;,\qquad
    \v_b^{(l)}=\W_b^{(l-1)}\z^{(l-1)}.
\end{equation}
Here we defined the leak conductance $g_\text{lk}^{(l)} = c^{(l)} - g_a^{(l)}$.

Interestingly, this is the same neural dynamics of a three-compartment neuron posited previously in~\cite{Urbanczik2014,Sacramento2018}, which we derived in a normative way. Here, $g_a$ (resp. $g_b$) are conductances between the apical (resp. basal) compartments and the soma, and $g_\text{lk}$ is the somatic leak. Our derivation clarifies the relationship between the physiological quantities and the objective function. Indeed, the basal and apical  conductances play the role of the inverse variances in the predictive coding objective. These inverse variances, generally referred to as `precisions' in the PC literature, have previously been argued to be encoded in lateral inhibitions~\cite{Friston2005} or top-down attention~\cite{Friston2010a,kanai2015cerebral}. In our model, they are simply encoded in the compartmental conductances. In Sec.~\ref{sec:theory} we show that the leak term also has a functional role and determines a threshold for dynamically learning more diverse latent variables~\cite{BioCCA,AdaDimRed}.

\subsection{Synaptic weight updates} 
Matching observations in the cortex, in our circuit~\cite{Larkum2013,gilbert2013top,keller2018predictive,Larkum756,pyramidal_review}, the basal and apical weights ($\W_b$ and $\W_a$) of our algorithm are updated differently. The basal synaptic weights of the pyramidal neurons, given by the elements of $\W_b^{(l)}$, are updated by the product of two factors represented in the corresponding post- and pre-synaptic neurons~(Eq.~\ref{eq:W_b}):
\begin{equation}\label{eq:simplified_dW_b}
     \delta \W_b^{(l-1)} \propto \Big[g_a^{(l)}\v^{(l)}_a-c^{(l)}\z^{(l)}\Big]\z^{(l-1)\top}\,.
\end{equation}
The first term in parentheses, $g_a^{(l)} \v^{(l)}_a$, is proportional to the total apical current, given by the difference between the excitatory synaptic current in the apical tuft, $\W_a^{(l)\top}\z^{(l+1)}$, and the inhibitory current induced by interneurons synapsing onto the distal compartment, $\Q^{(l)\top}\n^{(l)}$.
Biologically, this factor can be approximated by the calcium plateau potential traveling down the apical shaft. The calcium plateau potential has been experimentally seen to drive plasticity of the basal synapses matching the derived update rule of our circuit~\cite{Larkum2013,golding2002,bittner2015,bittner2017behavioral,MageeGrienberger2020}. 
The second factor is simply the neural output. 
Because the total update is not purely dependent on the action potentials of the pre- and post-synaptic neurons, such plasticity is called non-Hebbian~\cite{MageeGrienberger2020}.
%
%
The synaptic learning rule for the apical weights $\W_a^{(l)}$ (Eq.~\ref{eq:W_a}) and the synapses connecting the interneurons and the pyramidal neurons $\Q^{(l)}$ (Eq.~\ref{eq:Q}) are simply Hebbian, also matching experimental observations in the cortex~\cite{sjostrom2006,Kampa2007}.

\subsection{Features and limitations}

Here we summarize the main differences between the biological implementations of CCPC and PC. In PC, the forward and backward weights (in our notation $\W_b$ and $\W_a^\top$) are symmetric, whereas in CCPC, they are not constrained to be symmetric and are generically not so (see Sec.~\ref{sec:experiment}). Another difference is that in PC, the connection between value neurons and error neurons is one-to-one and fixed; this constraint does not exist between the pyramidal and interneurons of our model, which are no longer one-to-one, and have plastic connectivity. For the relationship between our interneurons and the PC error neurons see Sec.~\ref{sec:theory}.

While we have shown the improvement in biological realism of our model over the traditional PC network, this was done in the linear case. However, recent work on deep linear networks~\cite{saxe2013exact}, has provided many insights into the learning dynamics of deep networks. Some of the properties discovered in the deep linear network, like ``balancedness" of weights~\cite{arora2018optimization,du2018algorithmic}, generalize for certain nonlinear networks~\cite{du2018algorithmic}. We believe several of our observations will generalize in a similar fashion.

There are other aspects that can also be improved. For example, the connectivity between pyramidal neurons and interneurons of CCPC is required to be symmetric. This symmetry can be achieved via Hebbian learning rules~\cite{RRR}; however, it would be interesting to explore whether such symmetry is indeed required. Note that in CCPC, symmetric connectivity is only required between neurons of the \emph{same} cortical region, which is a much less stringent biological requirement than the that of symmetric weights between \emph{different} cortical regions, as in PC. 
Another shortcoming of our model is that while the teaching signal for the basal synapses in CCPC is signed and graded, in the cortex these signals are generally believed to be stereotypical~\cite{Larkum2013}. Graded calcium-mediated signals were recently observed \cite{Gidon2020}.


\section{Theoretical arguments}\label{sec:theory}

We summarize some of the theoretical features of our framework. For proofs see SM Sec.~\ref{app:proofs}.)

\paragraph{The errors are implicitly computed in $\v_a^{(l)}$.} In backpropagation as well as in PC, the learning algorithm computes a loss (or a local error) which is then used to compute updates to the weights. In CCPC, no error or loss is explicitly computed. So how does CCPC learn? In the following proposition, we show that in the $g_a\to 0$ limit, the algorithm implicitly computes an error in its weight updates. This quasi feed-forward or weak nudging limit, has been used to explore the learning dynamics of biologically plausible networks~\cite{Sacramento2018} and is related to the `fixed prediction' assumption ~\cite{Whittington2017,PCarbitraryPoor,PCexactBackprop}.

\begin{prop}\label{prop:forward_weights} Assume that the learning rules (Eqs.~\ref{eq:learning}) are at equilibrium and we receive a new datapoint given by $\x_{T+1}, \y_{T+1}$. For $c^{(l)}=0$ and in the limit of $\e\equiv g_a/g_b\to 0$, the leading term in $\e$ for the forward weight updates $\delta \W_b^{(l-1)}$ for this new sample is given by 
\begin{align*}
     \delta  \W_b^{(l-1)} \propto \v_{a,T+1}^{(l)}{\z_{T+1}^{(l-1)}}^\top = &\;   
    \e^{n-l}\left[{\W_a^{(l)\top}\cdots{\W_a^{(n-1)\top}}}(\y_{T+1} - \tilde \y_{T+1})\right]  {\z_{T+1}^{(l-1)\top}}+\mathcal{O}({\e^{n-l+1}}),
\end{align*}
where $\tilde \y_{T+1}=\Y\Z^{(l-1)\top}
        \left( \Z^{(l-1)}\Z^{(l-1)\top}\right)^{-1} \z_{T+1}^{(l-1)}$ is the optimal linear inferred value for $\y_{T+1}$.
\end{prop}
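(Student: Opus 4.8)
The plan is to reduce the proposition to a perturbative expansion of the apical potential $\v_a^{(l)}$ in the small parameter $\e=g_a/g_b$, using the neural-dynamics and learning equilibria as the only inputs. First I would use the neural-dynamics equilibrium of Eq.~\eqref{eq:lin_neural_dynamics} at $c^{(l)}=0$: grouping the feedforward and feedback currents gives the exact identity $g_b^{(l)}(\v_b^{(l)}-\z^{(l)})+g_a^{(l)}\v_a^{(l)}=0$, i.e.\ $\z^{(l)}=\v_b^{(l)}+\e\,\v_a^{(l)}$, so the equilibrium activity is the pure feedforward pass $\v_b^{(l)}=\W_b^{(l-1)}\z^{(l-1)}$ plus an $O(\e)$ top-down correction. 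Combined with the basal update \eqref{eq:simplified_dW_b} at $c^{(l)}=0$, which reads $\delta\W_b^{(l-1)}\propto g_a^{(l)}\v_a^{(l)}\z^{(l-1)\top}$, the whole claim becomes the statement that $\v_{a,T+1}^{(l)}=\e^{\,n-l}\,\W_a^{(l)\top}\cdots\W_a^{(n-1)\top}(\y_{T+1}-\wt\y_{T+1})+O(\e^{\,n-l+1})$. So the substance of the proof is to compute the leading order of $\v_a^{(l)}$.

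Next I would record what the learning equilibria of Eqs.~\eqref{eq:learning} pin down on the training set, writing $\ol{\,\cdot\,}$ for the empirical average over the first $T$ samples. The $\W_a$ rule gives $\W_a^{(l)}=\ol{\z^{(l+1)}\z^{(l)\top}}$, and the basal rule at $c=0$ forces the orthogonality $\ol{\v_a^{(l)}\z^{(l-1)\top}}=0$; together with $\v_a^{(l)}=\e^{-1}(\z^{(l)}-\W_b^{(l-1)}\z^{(l-1)})$ these are exactly the normal equations identifying $\W_b^{(l)}$ as the least-squares map from layer $l$ to layer $l+1$ and identifying $\v_a^{(l)}$ as a regression residual decorrelated from the layer below. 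This is the algebraic origin of the optimal-prediction object $\wt\y$. I would then substitute $\z^{(l+1)}=\W_b^{(l)}\z^{(l)}+\e\,\v_a^{(l+1)}$ into the definition of $\v_a^{(l)}$ to obtain the backward recursion $\v_a^{(l)}=\bigl[\W_a^{(l)\top}\W_b^{(l)}-\tfrac{1}{g_a^{(l)}}\M^{(l)}\bigr]\z^{(l)}+\e\,\W_a^{(l)\top}\v_a^{(l+1)}$, where $\M^{(l)}=\Q^{(l)\top}\Q^{(l)}$ is the interneuron (Lagrange-multiplier) matrix. The second term transports the error one layer down while attaching exactly one factor of $\W_a^{(l)\top}$ and one factor of $\e$, which is precisely the structure of the claimed product and of the exponent $n-l$; the induction runs downward from the top layer, where $\z^{(n)}=\y_{T+1}$ is clamped and supplies the error source.

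The crux, and the step I expect to be the main obstacle, is controlling the \emph{local} bracket $[\W_a^{(l)\top}\W_b^{(l)}-\tfrac{1}{g_a^{(l)}}\M^{(l)}]\z^{(l)}$: I must show that the recurrent inhibitory feedback $\tfrac{1}{g_a^{(l)}}\M^{(l)}\z^{(l)}$ delivered by the interneurons reconstructs the predictable part of the top-down drive and cancels it, leaving a residual of size $O(\e^{\,n-l})$ rather than $O(1)$. This requires using the self-consistent fixed point of the $\Q$ dynamics, namely $\tfrac1T\Q^{(l)}\Z^{(l)}\Z^{(l)\top}=g_a^{(l)}\Q^{(l)}$ together with the whitening constraint $\tfrac1T\Z^{(l)}\Z^{(l)\top}\preceq\I$, to pin both the support and the $\e$-scaling of $\M^{(l)}$, and then feeding this into the soma solution $\z^{(l)}=(g_b^{(l)}\I+\M^{(l)})^{-1}[g_b^{(l)}\v_b^{(l)}+g_a^{(l)}\W_a^{(l)\top}\z^{(l+1)}]$. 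The delicate bookkeeping is that the prefactor $(g_b^{(l)}\I+\M^{(l)})^{-1}$ supplies the extra power of $\e$ along the constrained directions, so powers of $\e$ enter simultaneously through $\M^{(l)}$ and through the top-down coupling and must be tracked jointly. Finally I would close the induction by checking at the top layer that $\W_a^{(n-1)\top}\y-\tfrac{1}{g_a}\M^{(n-1)}\z^{(n-1)}$ reduces to $\e\,\W_a^{(n-1)\top}(\y-\wt\y)+O(\e^2)$ via the normal equations, and verify that the cross-layer residuals $\y-\wt\y^{(l)}$ telescope so that the residual appearing at layer $l$ is exactly the optimal-prediction error from $\z^{(l-1)}$ stated in the proposition.
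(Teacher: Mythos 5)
Your overall skeleton does track the paper's argument in its essential ingredients: reducing the claim to the leading behaviour of $\v_a^{(l)}$, transporting the output error downward through a recursion that attaches one factor of $\e\,\W_a^{(k)\top}$ per layer (this is, in substance, the paper's Proposition~\ref{prop:latentsol} and Corollary~\ref{corr:leadingsol}, proved by downward induction on the neural-dynamics equilibrium), and invoking the $\W_b$ learning equilibrium $\ol{\v_a^{(l)}\z^{(l-1)\top}}=0$ as the algebraic origin of $\tilde\y$. Carried through consistently, those three ingredients give the paper's proof: first solve the dynamics to exhibit $\v_a^{(l)}$ as a linear function $\G\,\z^{(l-1)}+\H\,\y$ with $\H=\e^{\,n-l-1}\W_a^{(l)\top}\cdots\W_a^{(n-1)\top}+\mathcal O(\e^{\,n-l})$ (up to normalization conventions for $\v_a$), then use the orthogonality once to get $\G=-\H\,\Y\Z^{(l-1)\top}\bigl(\Z^{(l-1)}\Z^{(l-1)\top}\bigr)^{-1}$ to leading order, so that on the new sample everything but $\H\bigl(\y_{T+1}-\tilde\y_{T+1}\bigr)$ cancels.

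The genuine gap is in your designated crux step: you propose to establish the cancellation of the $\z^{(l-1)}$-predictable part of the top-down drive from the $\Q$ fixed point $\tfrac1T\Q^{(l)}\Z^{(l)}\Z^{(l)\top}=g_a^{(l)}\Q^{(l)}$ together with the whitening constraint, with extra powers of $\e$ supplied by the prefactor $(g_b\I+\M^{(l)})^{-1}$. That mechanism cannot deliver the result. The $\Q$ equilibrium only restricts $\Q^{(l)}$ to a particular eigenspace of the layer covariance; it contains no information relating $\Q^{(l)\top}\Q^{(l)}$ to $\W_a^{(l)\top}\W_b^{(l)}$, which is exactly what your bracket $\bigl[\W_a^{(l)\top}\W_b^{(l)}-\tfrac{1}{g_a}\M^{(l)}\bigr]$ requires, and the soma prefactor is an order-one matrix that supplies no powers of $\e$. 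Tellingly, the paper's proof never invokes the $\Q$ equilibrium or the constraint at all: it keeps $\F^{(l)}=(\I+\e\,\Q^{(l)\top}\Q^{(l)})^{-1}$ as an arbitrary matrix throughout. The cancellation at every layer (not only at the top, where you do use the normal equations) comes from the $\W_b$ equilibrium averaged over the training set, which forces the $\z^{(l-1)}$-dependent parts of $\W_a^{(l)\top}\z^{(l+1)}$ and $\Q^{(l)\top}\Q^{(l)}\z^{(l)}$ to differ by precisely the regression term; note also that the local brackets are not negligible relative to $\v_a^{(l)}$ but of the same order, contributing the telescoping differences of successive regressions. The repair is to demote the $\Q$ dynamics to a bystander: use your recursion only to pin the $\e$-scaling and $\W_a^{\top}$-product structure of the $\y$-coefficient, and obtain the cancellation from the orthogonality $\ol{\v_a^{(k)}\z^{(k-1)\top}}=0$, applied either once at layer $l$ (as the paper does) or inductively at every layer.
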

This implies that $\v_a^{(l)}$, which we argued can be implemented as the calcium plateau potential, implicitly encodes an error signal. This is in spirit similar to difference target propagation~\cite{Lee2014}, where the difference between the forward pass and the target is explicitly computed and backpropagated. 
\paragraph{Adaptive latent dimension discovery for a single module.} For optimization within a single hidden layer, we show that the system performs adaptive latent dimension discovery by thresholding eigenvalues of a Gramian including contributions from both adjacent layers.  If we take the loss function in \eqref{eq:linear_final}, setting $g_a=1,g_b=\epsilon$, and optimizing over $\W_a,\W_b$, we get the objective $\tr[ (c+\epsilon)\Z^\top \Z -\Z^\top\Z\big[\epsilon \X^\top (\X \X^\top)^{-1}\X+\Y^\top \Y\big]$ for $\Z$. The next theorem indicates the solution:

%

%
\begin{thm}\label{thm:pca}
Let the concatenated matrix $\bXi = [ \epsilon^{1/2}\C_x^{-1/2}\X,\Y]\in \R^{(d_X+d_Y)\times T}$ have SVD $\bXi=\sum_{\alpha=1}^{(d_X+d_Y)} \uvec_\alpha\lambda_\alpha \vvec_\alpha^\top$ with $\{\uvec_\alpha\in\R^{(d_X+d_Y)} \}$ and $\{\vvec_\alpha\in \R^T\}$ both being sets of orthonormal vectors, with the convention that $(\lambda_\alpha)$ are sorted in decreasing order.  We consider the minimization
\begin{equation}\label{eq:Simplified_PC}
\min_{\Z\in \R^{d\times T},\,\,\Z\Z^\top \preccurlyeq I_d}\tr\left[ (c+\epsilon)\Z^\top \Z -\Z^\top\Z\big[\epsilon \X^\top (\X \X^\top)^{-1}\X+\Y^\top \Y\big] \right].
\end{equation}
Then one of the optimal solutions
is given by $\hat \Z=\sum_{\alpha=1}^D \w_\alpha\mathbbm {1}(\lambda_\alpha>c+\epsilon)\vvec_\alpha^\top$ where $\{\w_\alpha\in\R^d \}$ is an arbitrary set of orthonormal vectors, with $D=\min(d,d_X+d_Y)$.
\end{thm}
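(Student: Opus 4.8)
The plan is to collapse the bracketed operator into a single Gramian and then recognize the problem as a linear trace minimization over a spectrally constrained set. Since $\bXi$ stacks $\epsilon^{1/2}\C_x^{-1/2}\X$ over $\Y$ with $\C_x=\X\X^\top$, one has $\bXi^\top\bXi=\epsilon\X^\top(\X\X^\top)^{-1}\X+\Y^\top\Y$, so the objective in \eqref{eq:Simplified_PC} is exactly $\tr[\Z^\top\Z\,\A]$ with $\A\equiv(c+\epsilon)\I_T-\bXi^\top\bXi$. Using the SVD, the induced eigendecomposition is $\bXi^\top\bXi=\sum_\alpha\lambda_\alpha^2\,\vvec_\alpha\vvec_\alpha^\top$, so $\A$ has eigenpairs $\big((c+\epsilon)-\lambda_\alpha^2,\ \vvec_\alpha\big)$ for $\alpha\le d_X+d_Y$ and eigenvalue $c+\epsilon$ on the orthogonal complement. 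The negative eigenvalues of $\A$ are therefore precisely those with $\lambda_\alpha^2>c+\epsilon$, which will be the directions we want to populate (these are the ones encoded by the indicator in the statement).

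Next I would pass to the lift $\P=\Z^\top\Z\succeq0$, under which the objective becomes the linear functional $\tr[\P\A]$. The constraint $\Z\Z^\top\preccurlyeq\I_d$ is equivalent to every singular value of $\Z$ being at most one, hence to $0\preccurlyeq\P\preccurlyeq\I_T$; the only remaining condition is $\operatorname{rank}\P\le d$. Conversely, any such $\P=\sum_i p_i\vvec_i\vvec_i^\top$ is realizable by $\Z=\sum_i\sqrt{p_i}\,\w_i\vvec_i^\top$ with $\{\w_i\}\subset\R^d$ orthonormal, so nothing is lost. The problem is thus $\min\{\tr[\P\A]:0\preccurlyeq\P\preccurlyeq\I_T,\ \operatorname{rank}\P\le d\}$.

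The core step is a von Neumann / Ky Fan trace bound: for symmetric $\P,\A$ with eigenvalues $p_1^\downarrow\ge p_2^\downarrow\ge\cdots$ and $a_1^\uparrow\le a_2^\uparrow\le\cdots$, one has $\tr[\P\A]\ge\sum_i p_i^\downarrow a_i^\uparrow$, with equality when the eigenbasis of $\P$ aligns with that of $\A$ (largest $p$ against most negative $a$). It then remains to minimize $\sum_i p_i^\downarrow a_i^\uparrow$ over $p_i\in[0,1]$ with at most $d$ nonzero. Because this is linear in the $p_i$, the minimum sits at an extreme point with $p_i\in\{0,1\}$: set $p_i=1$ on the (at most $d$) coordinates where $a_i^\uparrow<0$, i.e.\ where $\lambda_\alpha^2>c+\epsilon$, and $p_i=0$ otherwise. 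This retains the top $D=\min(d,d_X+d_Y)$ threshold-passing directions. Taking $\hat\Z=\sum_{\alpha=1}^D\w_\alpha\,\mathbbm{1}(\lambda_\alpha^2>c+\epsilon)\,\vvec_\alpha^\top$ gives $\hat\Z^\top\hat\Z$ equal to the minimizing $\P$ while $\hat\Z\hat\Z^\top=\sum_\alpha\mathbbm{1}(\cdots)\w_\alpha\w_\alpha^\top$ is an orthogonal projection, hence $\preccurlyeq\I_d$; so $\hat\Z$ is feasible and attains the bound. The residual freedom in $\{\w_\alpha\}$, together with ties at $\lambda_\alpha^2=c+\epsilon$, explains the phrase ``one of the optimal solutions.''

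The main obstacle is the passage in the third step: one must justify that the lift $\Z\mapsto\P$ is lossless, that the box constraint $0\preccurlyeq\P\preccurlyeq\I_T$ and the rank constraint can be imposed jointly while still admitting a $0/1$-spectrum optimizer, and that the von Neumann lower bound is genuinely achieved by a feasible $\P$ of the prescribed form. This is in effect an extreme-point argument for a linear objective over the rank-truncated Fan polytope; once it is in place, the rest is bookkeeping with the spectral decomposition of $\bXi^\top\bXi$.
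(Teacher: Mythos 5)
Your proof is correct and follows essentially the same route as the paper's: the paper substitutes the SVD $\Z=\W\boldsymbol{\Lambda}_\Z\V_\Z^\top$, notes that $\W$ drops out, aligns $\V_\Z$ with the right singular vectors of $\bXi$ by the same sorted-pairing argument that underlies your von Neumann/Ky Fan bound (phrased there as a rearrangement over the doubly stochastic matrix $|\tilde V_{\alpha\beta}|^2$), and finishes with the identical box-constrained linear program whose extreme-point solution is the $0/1$ thresholding. The only discrepancy is your threshold $\mathbbm{1}(\lambda_\alpha^2>c+\epsilon)$ versus the stated $\mathbbm{1}(\lambda_\alpha>c+\epsilon)$: yours is the version consistent with $\lambda_\alpha$ being singular values of $\bXi$, and the paper's own proof silently drops the squares (it is coherent only if $\lambda_\alpha$ denotes eigenvalues of $\bXi^\top\bXi$), so this is a notational slip in the paper rather than a gap in your argument.
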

Thus, the latent modes are found by thresholding the eignevalues~\cite{AdaDimRed} of the sum of Gramians~\cite{BioCCA} in Eq.~\eqref{eq:Simplified_PC}. This implies that the quantity $c^{(l)}$ that we showed is encoded in the somatic compartment leak conductance, biologically sets an adaptive threshold for the dimensionality of the latent representation.

From this theorem, we derive the computational limits of $\e\to\infty$ and $\e\to0$ in the supplementary materials. We also show that the neurons of the narrowest layer get whitened and the other layers become low-rank. For further details of these statements and all proofs see SM Sec.~\ref{app:proofs}.

\section{Numerical experiments}\label{sec:experiment}
\begin{figure}[!tbh]
\vspace{-5pt}
    \center\includegraphics{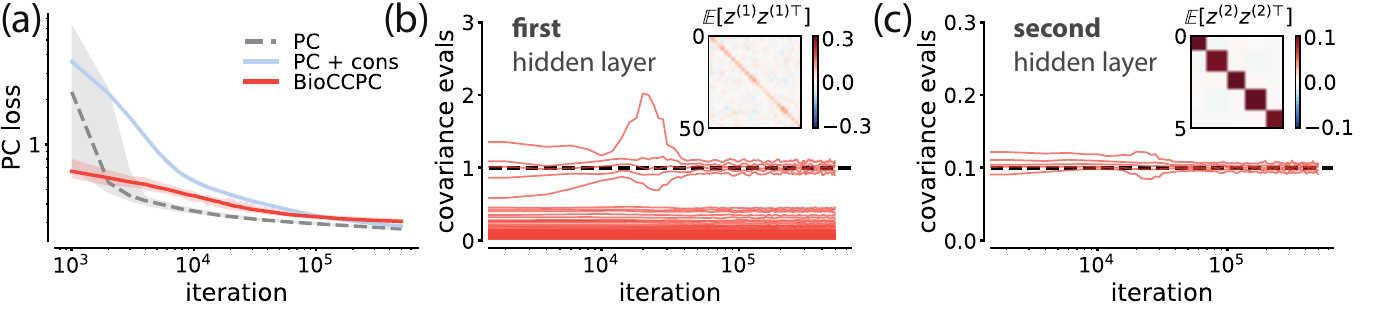}
    \caption{Our algorithm (BioCCPC) compares well to less biologically plausible predictive-coding (PC) implementations, while enforcing an inequality constraint on the covariance of hidden-layer activations. Simulations run on the MNIST dataset, using a network with two hidden layers, sizes 50 and 5. Shaded area represents the 95\% range out of 40 runs. (a) Evolution of the predictive-coding loss, Eq.~\eqref{eq:obj_doubled}, on a validation set during learning. 
    (b), (c) Evolution of the eigenvalues of the covariance matrix of hidden-layer activations in the two hidden layers. The dashed lines show the scale of the constraint; see SM Sec.~\ref{app:upperbound}. Inset: covariance matrix at end of training. Note how activity is whitened in the second hidden layer (c), and low-rank in the first hidden layer (b). \label{fig:learning_curves}}
\end{figure}

We tested our BioCCPC algorithm on the MNIST dataset~\cite{lecun1998gradient}, with one-hot labels. We centered the input images  and normalized by the standard deviation. We compared our results with a standard predictive coding network~\cite{Whittington2017} with linear activation (PC) and with a version that includes a covariance constraint (PC+cons; Sec.~\ref{app:supp_exp}). We used unit variances, $\sigma^{(l)2} = 1$, in the predictive-coding networks, and matching conductances in our algorithm, $g_b^{(1)} = g_a^{(n)} = 1$, $g_a^{(l)} = g^{(l)} = 1/2$ for other $l$.

Our method finds solutions that are almost optimal in terms of predictive-coding loss (see Fig.~\ref{fig:learning_curves}a) and are generally better than the solutions obtained from simply adding the covariance constraint to a predictive-coding network. Learning is slower in our network than in unconstrained PCN, as it takes time for the covariance constraint to be enforced. It is possible that different initialization schemes could improve the learning speed, but we leave this for future work.

The covariance constraint is saturated in the narrowest layer---in our example, the second hidden layer, of size 5---see Fig.~\ref{fig:learning_curves}c. The activation in the other hidden layer is rank-restricted by the narrowest layer, as described above, and so it is only whitened within a 5D subspace, Fig.~\ref{fig:learning_curves}b.
In the supplementary materials, we provide further numerical experiments, where we look at the effects of the thresholding, the size of the network, and performance on other datasets. We also show that the forward and backward weights, $\W_b$ and $\W_a^\top$, are not symmetric, in accordance with biological observations, and we explore their relation. See SM Sec.~\ref{app:supp_exp} for these and other simulation details. 


\section{Conclusion}

We have derived a biologically plausible algorithm for covariance-constrained predictive coding and have shown that it avoids the criticisms of PC and has many features in common with recent experimental observations. Our algorithm does not have symmetric forward and backward connectivity matrices, and does not need one-to-one connectivity between any neurons. Furthermore, we showed that the learning rules of our algorithm closely match experimental observations, and we connected our parameters to neurophysiological quantities. Using the simplicity afforded to us because of the linearity, we showed a number of interesting properties of our algorithm, including the effect of the somatic leak term, which acts as an adaptive thresholding mechanism. We also showed how our algorithm implicitly computes a difference term that it uses for learning. We argued that this term could be encoded in the calcium plateau potential, which is known to affect the learning of basal synapses. We hope that this concrete connection between our normative model and observation can further our understanding of cortical computation by guiding future experiments.

\section*{Broader impact}

We do not foresee any potential societal harm caused by our work. We hope our contribution moves forward the discussion on the relationship between multi-layer networks and cortical structure.
\bibliography{biblio.bib}

\begin{thebibliography}{1}

\bibitem{ji2019gradient}
Ziwei Ji and Matus Telgarsky.
\newblock Gradient descent aligns the layers of deep linear networks.
\newblock In {\em 7th International Conference on Learning Representations,
  ICLR 2019}, 2019.

\bibitem{snoek2006challenge}
Cees~GM Snoek, Marcel Worring, Jan~C Van~Gemert, Jan-Mark Geusebroek, and
  Arnold~WM Smeulders.
\newblock The challenge problem for automated detection of 101 semantic
  concepts in multimedia.
\newblock In {\em Proceedings of the 14th ACM international conference on
  Multimedia}, pages 421--430, 2006.

\bibitem{glorot2010understanding}
Xavier Glorot and Yoshua Bengio.
\newblock Understanding the difficulty of training deep feedforward neural
  networks.
\newblock In {\em Proceedings of the thirteenth international conference on
  artificial intelligence and statistics}, pages 249--256. JMLR Workshop and
  Conference Proceedings, 2010.

\bibitem{optuna_2019}
Takuya Akiba, Shotaro Sano, Toshihiko Yanase, Takeru Ohta, and Masanori Koyama.
\newblock Optuna: A next-generation hyperparameter optimization framework.
\newblock In {\em Proceedings of the 25rd {ACM} {SIGKDD} International
  Conference on Knowledge Discovery and Data Mining}, 2019.

\bibitem{bergstra2011algorithms}
James Bergstra, R{\'e}mi Bardenet, Yoshua Bengio, and Bal{\'a}zs K{\'e}gl.
\newblock Algorithms for hyper-parameter optimization.
\newblock {\em Advances in neural information processing systems}, 24, 2011.

\end{thebibliography}


\begin{thebibliography}{10}

\bibitem{Rao1999}
Rajesh~P.N. Rao and Dana~H. Ballard.
\newblock {Predictive coding in the visual cortex: A functional interpretation
  of some extra-classical receptive-field effects}.
\newblock {\em Nature Neuroscience}, 1999.

\bibitem{Friston2005}
Karl Friston.
\newblock {A theory of cortical responses}.
\newblock {\em Philosophical Transactions of the Royal Society B: Biological
  Sciences}, 2005.

\bibitem{Srinivasan}
Mandyam~Veerambudi Srinivasan, Simon~Barry Laughlin, A.~Dubs, and George~Adrian
  Horridge.
\newblock Predictive coding: a fresh view of inhibition in the retina.
\newblock {\em Proceedings of the Royal Society of London. Series B. Biological
  Sciences}, 216(1205):427--459, 1982.

\bibitem{Whittington2017}
James~C.R. Whittington and Rafal Bogacz.
\newblock {An approximation of the error backpropagation algorithm in a
  predictive coding network with local hebbian synaptic plasticity}.
\newblock {\em Neural Computation}, 29(5):1229--1262, may 2017.

\bibitem{PConarbitrarygraphs}
Tommaso Salvatori, Luca Pinchetti, Beren Millidge, Yuhang Song, Tianyi Bao,
  Rafal Bogacz, and Thomas Lukasiewicz.
\newblock Learning on arbitrary graph topologies via predictive coding, 2022.

\bibitem{DGPC}
André Ofner and Sebastian Stober.
\newblock Differentiable generalised predictive coding, 2021.

\bibitem{predictivebalance1}
Martin Boerlin, Christian~K Machens, and Sophie Den{\`e}ve.
\newblock Predictive coding of dynamical variables in balanced spiking
  networks.
\newblock {\em PLoS computational biology}, 9(11):e1003258, 2013.

\bibitem{predictivebalance2}
Jonathan Kadmon, Jonathan Timcheck, and Surya Ganguli.
\newblock Predictive coding in balanced neural networks with noise, chaos and
  delays.
\newblock {\em Advances in neural information processing systems},
  33:16677--16688, 2020.

\bibitem{clinicalPCN}
Ryan Smith, Paul Badcock, and Karl~J Friston.
\newblock {Recent advances in the application of predictive coding and active
  inference models within clinical neuroscience}.
\newblock {\em Psychiatry and Clinical Neurosciences}, 75(1):3--13, 2021.

\bibitem{HabNRef}
Paul~F. Kinghorn, Beren Millidge, and Christopher~L. Buckley.
\newblock Habitual and reflective control in hierarchical predictive coding,
  2021.

\bibitem{AssociativePC}
Tommaso Salvatori, Yuhang Song, Yujian Hong, Lei Sha, Simon Frieder, Zhenghua
  Xu, Rafal Bogacz, and Thomas Lukasiewicz.
\newblock {Associative Memories via Predictive Coding}.
\newblock In M~Ranzato, A~Beygelzimer, Y~Dauphin, P~S Liang, and J~Wortman
  Vaughan, editors, {\em Advances in Neural Information Processing Systems},
  volume~34, pages 3874--3886. Curran Associates, Inc., 2021.

\bibitem{PCreview}
Beren Millidge, Anil Seth, and Christopher~L Buckley.
\newblock Predictive coding: a theoretical and experimental review, 2021.

\bibitem{Millidgereview}
Beren Millidge, Tommaso Salvatori, Yuhang Song, Rafal Bogacz, and Thomas
  Lukasiewicz.
\newblock Predictive coding: Towards a future of deep learning beyond
  backpropagation?, 2022.

\bibitem{Bastos2012}
Andr{\'e}~M. Bastos, W.~Martin Usrey, Rick~A Adams, George~R. Mangun, Pascal
  Fries, and Karl~J. Friston.
\newblock Canonical microcircuits for predictive coding.
\newblock {\em Neuron}, 76:695--711, 2012.

\bibitem{Heilbron}
Micha Heilbron and Maria Chait.
\newblock {Great Expectations: Is there Evidence for Predictive Coding in
  Auditory Cortex?}
\newblock {\em Neuroscience}, 389:54--73, 2018.

\bibitem{Kogo}
Naoki Kogo and Chris Trengove.
\newblock {Is predictive coding theory articulated enough to be testable?}
\newblock {\em Frontiers in Computational Neuroscience}, 9, 2015.

\bibitem{Spruston2008}
Nelson Spruston.
\newblock {Pyramidal neurons: dendritic structure and synaptic integration}.
\newblock {\em Nature Reviews Neuroscience}, 9(3):206--221, 2008.

\bibitem{Larkum2013}
Matthew Larkum.
\newblock A cellular mechanism for cortical associations: an organizing
  principle for the cerebral cortex.
\newblock {\em Trends in Neurosciences}, 36(3):141--151, 2013.

\bibitem{sjostrom2006}
Per~Jesper Sj{\"o}str{\"o}m and Michael H{\"a}usser.
\newblock A cooperative switch determines the sign of synaptic plasticity in
  distal dendrites of neocortical pyramidal neurons.
\newblock {\em Neuron}, 51(2):227--238, 2006.

\bibitem{golding2002}
Nace~L Golding, Nathan~P Staff, and Nelson Spruston.
\newblock Dendritic spikes as a mechanism for cooperative long-term
  potentiation.
\newblock {\em Nature}, 418(6895):326--331, 2002.

\bibitem{gambino2014sensory}
Fr{\'e}d{\'e}ric Gambino, St{\'e}phane Pag{\`e}s, Vassilis Kehayas, Daniela
  Baptista, Roberta Tatti, Alan Carleton, and Anthony Holtmaat.
\newblock Sensory-evoked {LTP} driven by dendritic plateau potentials in vivo.
\newblock {\em Nature}, 515(7525):116--119, 2014.

\bibitem{pyramidal_review}
Guy Major, Matthew~E. Larkum, and Jackie Schiller.
\newblock Active properties of neocortical pyramidal neuron dendrites.
\newblock {\em Annual Review of Neuroscience}, 36(1):1--24, 2013.
\newblock PMID: 23841837.

\bibitem{bittner2015}
Katie~C Bittner, Christine Grienberger, Sachin~P Vaidya, Aaron~D Milstein,
  John~J Macklin, Junghyup Suh, Susumu Tonegawa, and Jeffrey~C Magee.
\newblock Conjunctive input processing drives feature selectivity in
  hippocampal ca1 neurons.
\newblock {\em Nature neuroscience}, 18(8):1133, 2015.

\bibitem{bittner2017behavioral}
Katie~C Bittner, Aaron~D Milstein, Christine Grienberger, Sandro Romani, and
  Jeffrey~C Magee.
\newblock Behavioral time scale synaptic plasticity underlies ca1 place fields.
\newblock {\em Science}, 357(6355):1033--1036, 2017.

\bibitem{MageeGrienberger2020}
Jeffrey~C. Magee and Christine Grienberger.
\newblock {Synaptic plasticity forms and functions}.
\newblock {\em Annual Review of Neuroscience}, 43, 2020.

\bibitem{hardie2009synaptic}
Jason Hardie and Nelson Spruston.
\newblock Synaptic depolarization is more effective than back-propagating
  action potentials during induction of associative long-term potentiation in
  hippocampal pyramidal neurons.
\newblock {\em Journal of Neuroscience}, 29(10):3233--3241, 2009.

\bibitem{kording2001supervised}
Konrad~P K{\"o}rding and Peter K{\"o}nig.
\newblock Supervised and unsupervised learning with two sites of synaptic
  integration.
\newblock {\em Journal of Computational Neuroscience}, 11(3):207--215, 2001.

\bibitem{Urbanczik2014}
Robert Urbanczik and Walter Senn.
\newblock {Learning by the Dendritic Prediction of Somatic Spiking}.
\newblock {\em Neuron}, 81(3):521--528, 2014.

\bibitem{guergiuev2016deep}
Jordan Guergiuev, Timothy~P Lillicrap, and Blake~A Richards.
\newblock Deep learning with segregated dendrites.
\newblock {\em arXiv preprint arXiv:1610.00161}, 2016.

\bibitem{Sacramento2018}
Jo{\~{a}}o Sacramento, Rui~Ponte Costa, Yoshua Bengio, and Walter Senn.
\newblock {Dendritic cortical microcircuits approximate the backpropagation
  algorithm}.
\newblock {\em Advances in Neural Information Processing Systems},
  2018-Decem(Nips):8721--8732, oct 2018.

\bibitem{Payeur2020}
Alexandre Payeur, Jordan Guerguiev, Friedemann Zenke, Blake~A Richards, and
  Richard Naud.
\newblock {Burst-dependent synaptic plasticity can coordinate learning in
  hierarchical circuits}.
\newblock {\em bioRxiv}, 2020.

\bibitem{Gidon2020}
Albert Gidon, Timothy~Adam Zolnik, Pawel Fidzinski, Felix Bolduan, Athanasia
  Papoutsi, Panayiota Poirazi, Martin Holtkamp, Imre Vida, and Matthew~Evan
  Larkum.
\newblock {Dendritic action potentials and computation in human layer 2/3
  cortical neurons}.
\newblock {\em Science}, 367(6473):83--87, jan 2020.

\bibitem{Milstein2020}
Aaron~D. Milstein, Yiding Li, Katie~C. Bittner, Christine Grienberger, Ivan
  Soltesz, Jeffrey~C. Magee, and Sandro Romani.
\newblock Bidirectional synaptic plasticity rapidly modifies hippocampal
  representations independent of correlated activity.
\newblock {\em bioRxiv}, 2020.

\bibitem{backprop}
David~E. Rumelhart, Geoffrey~E. Hinton, and Ronald~J. Williams.
\newblock Learning internal representations by error propagation.
\newblock In David~E. Rumelhart and James~L. Mcclelland, editors, {\em Parallel
  Distributed Processing: Explorations in the Microstructure of Cognition,
  {V}olume 1: {F}oundations}, pages 318--362. MIT Press, Cambridge, MA, 1986.

\bibitem{goodfellow2016deep}
Ian Goodfellow, Yoshua Bengio, and Aaron Courville.
\newblock {\em Deep learning}.
\newblock MIT press, 2016.

\bibitem{choi2005blind}
Seungjin Choi, Andrzej Cichocki, Hyung-Min Park, and Soo-Young Lee.
\newblock Blind source separation and independent component analysis: A review.
\newblock {\em Neural Information Processing-Letters and Reviews}, 6(1):1--57,
  2005.

\bibitem{pehlevan2019neuroscience}
Cengiz Pehlevan and Dmitri~B Chklovskii.
\newblock Neuroscience-inspired online unsupervised learning algorithms:
  Artificial neural networks.
\newblock {\em IEEE Signal Processing Magazine}, 36(6):88--96, 2019.

\bibitem{wanner2020whitening}
Adrian~A Wanner and Rainer~W Friedrich.
\newblock Whitening of odor representations by the wiring diagram of the
  olfactory bulb.
\newblock {\em Nature neuroscience}, 23(3):433--442, 2020.

\bibitem{DesjardinsNG}
Guillaume Desjardins, Karen Simonyan, Razvan Pascanu, and Koray Kavukcuoglu.
\newblock Natural neural networks, 2015.

\bibitem{Fujimoto_2018}
Yuki Fujimoto and Toru Ohira.
\newblock A neural network model with bidirectional whitening.
\newblock In {\em Artificial Intelligence and Soft Computing}, pages 47--57.
  Springer International Publishing, 2018.

\bibitem{Generalized_Whitening_Neural_Net}
Ping Luo.
\newblock Learning deep architectures via generalized whitened neural networks.
\newblock In {\em International Conference on Machine Learning}, pages
  2238--2246. PMLR, 2017.

\bibitem{Werbos:74}
P.~J. Werbos.
\newblock {\em Beyond Regression: New Tools for Prediction and Analysis in the
  Behavioral Sciences}.
\newblock PhD thesis, Harvard University, 1974.

\bibitem{crick1989recent}
Francis Crick.
\newblock The recent excitement about neural networks.
\newblock {\em Nature}, 337(6203):129--132, 1989.

\bibitem{LeCun1986}
Yann {Le Cun}.
\newblock {Learning Process in an Asymmetric Threshold Network}.
\newblock In {\em Disordered Systems and Biological Organization}, pages
  233--240. Springer, 1986.

\bibitem{Lecun1987}
Y.~{Le Cun} and Fran{\c{c}}oise Fogelman-Souli{\'{e}}.
\newblock {Mod{\`{e}}les connexionnistes de l'apprentissage}.
\newblock {\em Intellectica. Revue de l'Association pour la Recherche
  Cognitive}, 1987.

\bibitem{Kolen1994}
John~F. Kolen and Jordan~B. Pollack.
\newblock Backpropagation without weight transport.
\newblock {\em Proceedings of 1994 IEEE International Conference on Neural
  Networks (ICNN'94)}, 3:1375--1380 vol.3, 1994.

\bibitem{Lee2014}
Dong-Hyun Lee, Saizheng Zhang, Antoine Biard, and Yoshua Bengio.
\newblock {Target Propagation}.
\newblock {\em Iclr 15}, 1(3):1--16, 2014.

\bibitem{Lillicrap2016}
Timothy~P. Lillicrap, Daniel Cownden, Douglas~B. Tweed, and Colin~J. Akerman.
\newblock {Random synaptic feedback weights support error backpropagation for
  deep learning}.
\newblock {\em Nature Communications}, 7:1--10, 2016.

\bibitem{Scellier2017}
Benjamin Scellier and Yoshua Bengio.
\newblock {Equilibrium propagation: Bridging the gap between energy-based
  models and backpropagation}.
\newblock {\em Frontiers in Computational Neuroscience}, 11(May), 2017.

\bibitem{Krotov2019}
Dmitry Krotov and John~J. Hopfield.
\newblock Unsupervised learning by competing hidden units.
\newblock {\em Proceedings of the National Academy of Sciences},
  116(16):7723--7731, 2019.

\bibitem{Akrout2019}
Mohamed Akrout, Collin Wilson, Peter~C. Humphreys, Timothy Lillicrap, and
  Douglas Tweed.
\newblock {Deep Learning without Weight Transport}.
\newblock {\em arXiv}, apr 2019.

\bibitem{Guerguiev2019}
Jordan Guerguiev, Konrad Kording, and Blake Richards.
\newblock Spike-based causal inference for weight alignment.
\newblock {\em arXiv preprint arXiv:1910.01689}, 10 2019.

\bibitem{Ernoult2020}
Maxence Ernoult, Julie Grollier, Damien Querlioz, Yoshua Bengio, and Benjamin
  Scellier.
\newblock {Equilibrium Propagation with Continual Weight Updates}.
\newblock {\em arXiv preprint arXiv:2005.04168}, pages 1--37, 2020.

\bibitem{signsymmetry}
Will Xiao, Honglin Chen, Qianli Liao, and Tomaso~A. Poggio.
\newblock Biologically-plausible learning algorithms can scale to large
  datasets.
\newblock {\em CoRR}, abs/1811.03567, 2018.

\bibitem{RRR}
Siavash Golkar, David Lipshutz, Yanis Bahroun, Anirvan Sengupta, and Dmitri
  Chklovskii.
\newblock {A simple normative network approximates local non-Hebbian learning
  in the cortex}.
\newblock In H~Larochelle, M~Ranzato, R~Hadsell, M~F Balcan, and H~Lin,
  editors, {\em Advances in Neural Information Processing Systems}, volume~33,
  pages 7283--7295. Curran Associates, Inc., 2020.

\bibitem{RRR_workshop}
Siavash Golkar, David Lipshutz, Yanis Bahroun, Anirvan~M. Sengupta, and
  Dmitri~B. Chklovskii.
\newblock A biologically plausible neural network for local supervision in
  cortical microcircuits, 2020.

\bibitem{differentialTP}
Yoshua Bengio.
\newblock Deriving differential target propagation from iterating approximate
  inverses, 2020.

\bibitem{DendriticComputing}
Jyotibdha Acharya, Arindam Basu, Robert Legenstein, Thomas Limbacher, Panayiota
  Poirazi, and Xundong Wu.
\newblock {Dendritic Computing: Branching Deeper into Machine Learning}.
\newblock {\em Neuroscience}, 489:275--289, 2022.

\bibitem{lillicrap2020backpropreview}
Timothy~P Lillicrap, Adam Santoro, Luke Marris, Colin~J Akerman, and Geoffrey
  Hinton.
\newblock Backpropagation and the brain.
\newblock {\em Nature Reviews Neuroscience}, 21(6):335--346, 2020.

\bibitem{Sacramento2017}
Jo{\~{a}}o Sacramento, Rui~Ponte Costa, Yoshua Bengio, and Walter Senn.
\newblock {Dendritic error backpropagation in deep cortical microcircuits}.
\newblock {\em arXiv preprint arXiv:1801.00062}, pages 1--37, 2017.

\bibitem{PCarbitraryPoor}
Beren Millidge, Alexander Tschantz, and Christopher~L. Buckley.
\newblock Predictive coding approximates backprop along arbitrary computation
  graphs, 2020.

\bibitem{PCexactBackprop}
Yuhang Song, Thomas Lukasiewicz, Zhenghua Xu, and Rafal Bogacz.
\newblock {Can the Brain Do Backpropagation? --- Exact Implementation of
  Backpropagation in Predictive Coding Networks}.
\newblock In H~Larochelle, M~Ranzato, R~Hadsell, M~F Balcan, and H~Lin,
  editors, {\em Advances in Neural Information Processing Systems}, volume~33,
  pages 22566--22579. Curran Associates, Inc., 2020.

\bibitem{salvatori-PCbackprop}
Tommaso Salvatori, Yuhang Song, Thomas Lukasiewicz, Rafal Bogacz, and Zhenghua
  Xu.
\newblock Predictive coding can do exact backpropagation on convolutional and
  recurrent neural networks.
\newblock {\em CoRR}, abs/2103.03725, 2021.

\bibitem{rosenbaum2021}
Robert Rosenbaum.
\newblock On the relationship between predictive coding and backpropagation,
  2021.

\bibitem{Larkum756}
Matthew~E. Larkum, Thomas Nevian, Maya Sandler, Alon Polsky, and Jackie
  Schiller.
\newblock Synaptic integration in tuft dendrites of layer 5 pyramidal neurons:
  A new unifying principle.
\newblock {\em Science}, 325(5941):756--760, 2009.

\bibitem{takahashimagee}
Hiroto Takahashi and Jeffrey~C Magee.
\newblock Pathway interactions and synaptic plasticity in the dendritic tuft
  regions of {CA1} pyramidal neurons.
\newblock {\em Neuron}, 62(1):102--111, 2009.

\bibitem{klausberger2003brain}
Thomas Klausberger, Peter~J Magill, L{\'a}szl{\'o}~F M{\'a}rton, J~David~B
  Roberts, Philip~M Cobden, Gy{\"o}rgy Buzs{\'a}ki, and Peter Somogyi.
\newblock Brain-state-and cell-type-specific firing of hippocampal interneurons
  in vivo.
\newblock {\em Nature}, 421(6925):844--848, 2003.

\bibitem{riedemann2019diversity}
Therese Riedemann.
\newblock Diversity and function of somatostatin-expressing interneurons in the
  cerebral cortex.
\newblock {\em International Journal of Molecular Sciences}, 20(12):2952, 2019.

\bibitem{Friston2010a}
Karl Friston.
\newblock {The free-energy principle: A unified brain theory?}, 2010.

\bibitem{kanai2015cerebral}
Ryota Kanai, Yutaka Komura, Stewart Shipp, and Karl Friston.
\newblock Cerebral hierarchies: predictive processing, precision and the
  pulvinar.
\newblock {\em Philosophical Transactions of the Royal Society B: Biological
  Sciences}, 370(1668):20140169, 2015.

\bibitem{BioCCA}
David Lipshutz, Yanis Bahroun, Siavash Golkar, Anirvan~M. Sengupta, and
  Dmitri~B. Chklovskii.
\newblock {A Biologically Plausible Neural Network for Multichannel Canonical
  Correlation Analysis}.
\newblock {\em Neural Computation}, 33(9):2309--2352, 08 2021.

\bibitem{AdaDimRed}
Cengiz Pehlevan and Dmitri Chklovskii.
\newblock {A Normative Theory of Adaptive Dimensionality Reduction in Neural
  Networks}.
\newblock In C~Cortes, N~Lawrence, D~Lee, M~Sugiyama, and R~Garnett, editors,
  {\em Advances in Neural Information Processing Systems}, volume~28. Curran
  Associates, Inc., 2015.

\bibitem{gilbert2013top}
Charles~D Gilbert and Wu~Li.
\newblock Top-down influences on visual processing.
\newblock {\em Nature Reviews Neuroscience}, 14(5):350--363, 2013.

\bibitem{keller2018predictive}
Georg~B Keller and Thomas~D Mrsic-Flogel.
\newblock Predictive processing: a canonical cortical computation.
\newblock {\em Neuron}, 100(2):424--435, 2018.

\bibitem{Kampa2007}
Bj{\"{o}}rn~M. Kampa, Johannes~J. Letzkus, and Greg~J. Stuart.
\newblock {Dendritic mechanisms controlling spike-timing-dependent synaptic
  plasticity}.
\newblock {\em Trends in Neurosciences}, 30(9):456--463, 2007.

\bibitem{saxe2013exact}
Andrew~M Saxe, James~L McClelland, and Surya Ganguli.
\newblock Exact solutions to the nonlinear dynamics of learning in deep linear
  neural networks.
\newblock {\em arXiv preprint arXiv:1312.6120}, 2013.

\bibitem{arora2018optimization}
Sanjeev Arora, Nadav Cohen, and Elad Hazan.
\newblock On the optimization of deep networks: Implicit acceleration by
  overparameterization.
\newblock In {\em International Conference on Machine Learning}, pages
  244--253. PMLR, 2018.

\bibitem{du2018algorithmic}
Simon~S Du, Wei Hu, and Jason~D Lee.
\newblock Algorithmic regularization in learning deep homogeneous models:
  Layers are automatically balanced.
\newblock {\em Advances in Neural Information Processing Systems}, 31, 2018.

\bibitem{lecun1998gradient}
Yann LeCun, L{\'e}on Bottou, Yoshua Bengio, and Patrick Haffner.
\newblock Gradient-based learning applied to document recognition.
\newblock {\em Proceedings of the IEEE}, 86(11):2278--2324, 1998.

\end{thebibliography}
\bibliographystyle{unsrt}

\clearpage

\appendix

\begin{center}\LARGE{\textbf{Supplementary Materials}}\end{center}
\vspace{3pt}

\section{Derivation of upper bound}\label{app:upperbound}
In this section we give a more detailed derivation of our framework, using a slightly modified order of steps compared to the main text. This should help clarify some aspects of the final algorithm.

Similar to the main text, we start with  the predictive coding objective and introduce the weight doubles. We arrive at
\begin{multline}
    \label{eq:app_deep_paired}
    \min_{\Z,\W} L = \min_{\Z,\W_a,\W_b} \frac 12\sum_{l=1}^{n-1} \left[g_b^{(l)} \bigl\|\Z^{(l)}-\W_b^{(l-1)} \Z^{(l-1)}\bigr\|_F^2 \right.\\ \left. +\,g_a^{(l+1)}\bigl\|\Z^{(l+1)}-\W_a^{(l)} \Z^{(l)}\bigr\|_F^2+c^{(l)}\bigl\|\Z^{(l)}\Z^{(l)\top}\bigr\|^2_F\right],
\end{multline}
where we have also introduced a quadratic cost (a prior on $\Z$) with coefficient $c^{(l)}$. We now impose the constraint  on the empirical covariance, $\frac1T\Z^{(l)} \Z^{(l)\top} \preceq \rho^{(l)} \I$, of the internal variables $\Z^{(l)}$\footnote{The factors $\rho^{(l)}$ which set the scale of the constraint were set to 1 in the derivation in the main text for brevity. However, it is sensible for this scale factor to vary from layer to layer and possibly depend on the layer width. In Sec.~\ref{app:supp_exp}, we explore the effects of this parameter on the performance of the network.}:
\begin{multline}
    \label{eq:app:minL_with_cons}
    \min_{\Z,\W} L  \leq\min_{\substack{\Z,\W_a,\W_b\\\tfrac1T \Z^{(l)} \Z^{(l)\top} \preceq \rho^{(l)}\I}} \frac 12\sum_{l=1}^{n-1} \left[g_b^{(l)} \bigl\|\Z^{(l)}-\W_b^{(l-1)} \Z^{(l-1)}\bigr\|_F^2 \right.\\ \left. +\,g_a^{(l+1)}\bigl\|\Z^{(l+1)}-\W_a^{(l)} \Z^{(l)}\bigr\|_F^2+c^{(l)}\bigl\|\Z^{(l)}\Z^{(l)\top}\bigr\|^2_F\right].
\end{multline}
Expanding this and using the constraint we again upper bound the objective:
\begin{multline}
    \min_{\Z,\W} L  \leq\min_{\substack{\Z,\W_a,\W_b\\\tfrac1T \Z^{(l)} \Z^{(l)\top} \preceq \rho^{(l)}\I}} \frac 12\sum_{l=1}^{n-1} \left[g_b^{(l)} \bigl\|\Z^{(l)}-\W_b^{(l-1)} \Z^{(l-1)}\bigr\|_F^2 + g_a^{(l+1)}\bigl\|\Z^{(l+1)}\Z^{(l+1)\top}\bigr\|^2_F
    \right.\\ \left. 
    + \,g_a^{(l+1)}   \tr\bigl(- 2 \Z^{(l+1)\top}\W_a^{(l)}\Z^{(l)} +T\rho^{(l)} \W_a^{(l)\top} \W_a^{(l)}\bigr)+c^{(l)}\bigl\|\Z^{(l)}\Z^{(l)\top}\bigr\|^2_F\right],
\end{multline}
We next implement the inequality constraint using positive-definite Lagrange multipliers:
\begin{align}\label{app:Lsum}
    \min_{\Z,\W} L  \leq& \min_{\Z,\W_a,\W_b}\max_\Q \sum_{l=1}^{n-1} \tilde L^{(l)},
\end{align}
where 
\begin{equation}\label{app:Lsum_term}
    \begin{split}
      \tilde L^{(l)} &= \frac 12 \left[g_b^{(l)} \bigl\|\Z^{(l)}-\W_b^{(l-1)} \Z^{(l-1)}\bigr\|_F^2 
      +c_q^{(l)} \tr \Q^{(l)\top}{\Q^{(l)}}\bigl({\Z^{(l)}}\Z^{(l)\top} - T\rho^{(l)}\I\bigr)
    \right.\\
     &\qquad\qquad \left. +\, g_a^{(l+1)}   \tr\bigl(- 2 \Z^{(l+1)\top}\W_a^{(l)}\Z^{(l)} +T\rho^{(l)} \W_a^{(l)\top} \W_a^{(l)}\bigr) \right.\\
     &\qquad\qquad \left. +\,c^{(l)}\bigl\|\Z^{(l)}\Z^{(l)\top}\bigr\|^2_F+g_a^{(l+1)}\bigl\|\Z^{(l+1)}\Z^{(l+1)\top}\bigr\|^2_F\right],
    \end{split}
\end{equation}
where we have added a positive constant $c_q^{(l)}$ defining the coefficient of the inequality constraint. This constant does not affect the solution of the optimization problem but does affect the details of the neural dynamics. Motivated by biological arguments\footnote{Since the feedback of the interneurons is received and integrated in the apical compartment, we expect the apical-to-soma conductance to also be present here.}, we set this constant equal to $g_a^{(l)}$. In doing so, the interneuron variables no longer need a $(1 / g_a^{(l)})$ in their definition, as in the main text, and are instead given by  $\n^{(l)} = \Q^{(l)} \z^{(l)}$.

A final upper bound to \eqref{app:Lsum} can be found if each $\tilde L^{(l)}$ is optimized individually. Explicitly, we perform gradient descent of each variable, $\Z^{(l)}, \W_a^{(l)}, \W_b^{(l)}$, only with respect to $ \tilde L^{(l)}$ and not with respect to the entire objective $L$:
\begin{equation}
\label{app:splitgrads}
    \begin{split}
    \delta \Z^{(l)}= \nabla _{\Z^{(l)}} L &\;\to\;  \nabla _{\Z^{(l)}} \tilde L^{(l)},\\
    \delta \W_a^{(l)}= \nabla _{\W_a^{(l)}} L &\;\to\;  \nabla _{\W_a^{(l)}} \tilde L^{(l)}, \\
    \delta \W_b^{(l)}= \nabla _{\W_b^{(l)}} L &\;\to\;  \nabla _{\W_b^{(l)}} \tilde L^{(l)}.
    \end{split}
\end{equation}
This affords us an upper bound because we are no longer finding the actual minimum of $L$. Symbolically this can be written as in~\eqref{eq:linear_final}:
\begin{equation}
    \min_{\Z,\W} L  \leq  \sum_{l=1}^{n-1} \min_{\Z,\W_a,\W_b}\max_\Q \tilde L^{(l)},
\end{equation}
where the precise definition of this equation is given by the update equations~\eqref{app:splitgrads}. Note that these updates are no longer curl-free and hence cannot be found as the gradient of a single objective function.

\section{Theoretical section proofs}\label{app:proofs}
Note that the definition of $\e$ in Proposition~\ref{prop:forward_weights} is different from the definition in other sections.



\subsection{Proof of Proposition \ref{prop:forward_weights}}

To prove Proposition~\ref{prop:forward_weights}, we first show that the latent variables $\z^{(l)}$ have a specific form. As in the main text, we define $\e^{(l)}=g_a^{(l)}/g_b^{(l)}$ and, for simplicity, we assume $\e^{(l)}=\e$ is the same for all layers. For simplicity also, we assume $c^{(l)}$ (equivalently $d^{(l)}$ in previous section) is equal to zero, and use $c_q^{(l)}=g_a^{(l)}$ as in the previous section. This choice makes the notation simpler without changing our results. Equation~\eqref{eq:lin_neural_dynamics} becomes
\begin{align}\label{eq:neuralnoc}
    \dot \z^{(l)} = g_b^{(l)} \W_b^{(l-1)}\z^{(l-1)}+g_a^{(l)} \W_a^{(l)\top}\z^{(l+1)} -  g_b^{(l)}\z^{(l)}  
    -  g_a^{(l)}\Q^{(l)\top}\Q^{(l)} \z^{(l)}.
\end{align}

\begin{prop}\label{prop:latentsol}
Assume the neural dynamics given in Eq.~\eqref{eq:neuralnoc} are at equilibrium. Then the latent variables take the form 
\begin{equation}\label{eq:latentsol}
    \z^{(n-k)}=\B_k\left(\W_b^{(n-k-1)}\z^{(n-k-1)}+\e^k \A_k{\W_a^{(n-1)\top}} \y\right),
\end{equation}
where $\B_k$ and $\A_k$ are independent of $y$ and defined recursively via
\begin{align*}
    \A_{k+1} &= {\W_a^{(n-k-1)\top}} \B_k \A_k,\\
    \B_{k+1} &=\left(1-\e \F^{(n-k-1)}{\W_a^{(n-k-1)\top}} \B_k\W_b^{(n-k-1)}\right)^{-1}\F^{(n-k-1)},
\end{align*}
with $\F^{(l)}=\left(\I+\e {\Q^{(l)\top}}\Q^{(l)}\right)^{-1}$ and
\begin{equation*}
    \B_1 = \F^{(n-1)} \;, \quad \A_1 =1.
\end{equation*}
\end{prop}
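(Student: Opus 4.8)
The plan is to prove the formula by backward induction on $k$, starting from the clamped output layer $\z^{(n)} = \y$ and descending toward the input. The single ingredient I would use throughout is the equilibrium form of Eq.~\eqref{eq:neuralnoc}: setting $\dot\z^{(l)} = 0$ and dividing by $g_b^{(l)}$ gives $(\I + \e\,\Q^{(l)\top}\Q^{(l)})\z^{(l)} = \W_b^{(l-1)}\z^{(l-1)} + \e\,\W_a^{(l)\top}\z^{(l+1)}$, so that with $\F^{(l)} = (\I + \e\,\Q^{(l)\top}\Q^{(l)})^{-1}$ one obtains the clean recursion $\z^{(l)} = \F^{(l)}(\W_b^{(l-1)}\z^{(l-1)} + \e\,\W_a^{(l)\top}\z^{(l+1)})$. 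For the base case $k=1$ (layer $l = n-1$), I substitute $\z^{(n)} = \y$ into this recursion to get $\z^{(n-1)} = \F^{(n-1)}(\W_b^{(n-2)}\z^{(n-2)} + \e\,\W_a^{(n-1)\top}\y)$, which is exactly the claimed form with $\B_1 = \F^{(n-1)}$ and $\A_1 = 1$.

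For the inductive step I assume the formula at level $k$, namely $\z^{(n-k)} = \B_k(\W_b^{(n-k-1)}\z^{(n-k-1)} + \e^k\A_k\W_a^{(n-1)\top}\y)$, and write the equilibrium recursion at layer $l = n-k-1$, substituting this hypothesis for $\z^{(n-k)}$ inside the $\e\,\W_a^{(n-k-1)\top}\z^{(n-k)}$ term. This produces three pieces: the driving term $\F^{(n-k-1)}\W_b^{(n-k-2)}\z^{(n-k-2)}$, a feedback term proportional to $\e\,\F^{(n-k-1)}\W_a^{(n-k-1)\top}\B_k\W_b^{(n-k-1)}\z^{(n-k-1)}$ that depends on $\z^{(n-k-1)}$ itself, and a source term carrying $\e^{k+1}\F^{(n-k-1)}\W_a^{(n-k-1)\top}\B_k\A_k\W_a^{(n-1)\top}\y$. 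Recognizing the coefficient of $\y$ as precisely $\e^{k+1}\A_{k+1}$ via the definition $\A_{k+1} = \W_a^{(n-k-1)\top}\B_k\A_k$, I move the feedback term to the left-hand side, factor out $\z^{(n-k-1)}$, and invert. The inverse is exactly $(\I - \e\,\F^{(n-k-1)}\W_a^{(n-k-1)\top}\B_k\W_b^{(n-k-1)})^{-1}$, and composing it with the leading $\F^{(n-k-1)}$ reconstitutes the defined $\B_{k+1}$; the formula at level $k+1$ then reads off immediately, closing the induction.

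The computation is essentially bookkeeping, so the main obstacle is technical rather than conceptual: justifying that the matrix $\I - \e\,\F^{(n-k-1)}\W_a^{(n-k-1)\top}\B_k\W_b^{(n-k-1)}$ is invertible, so that $\B_{k+1}$ is well-defined. Since the proposition presupposes that an equilibrium exists, this is implicit, but I would observe that for small $\e$ the matrix is a perturbation of the identity and hence invertible, making the recursion well-posed in exactly the regime relevant to Proposition~\ref{prop:forward_weights}. Finally, the stated independence of $\A_k,\B_k$ from $\y$ follows by inspection of the recursions, since they involve only $\W_a$, $\W_b$, the matrices $\F^{(l)}$ (built from $\Q^{(l)}$), and powers of $\e$, none of which depend on the clamped output.
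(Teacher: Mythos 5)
Your proposal is correct and follows essentially the same argument as the paper: the equilibrium condition $\z^{(l)} = \F^{(l)}\bigl(\W_b^{(l-1)}\z^{(l-1)} + \e\,\W_a^{(l)\top}\z^{(l+1)}\bigr)$, the base case at $l=n-1$ with $\z^{(n)}=\y$, and the inductive step of substituting the level-$k$ form, collecting the $\z^{(n-k-1)}$ feedback term, and inverting to produce $\B_{k+1}$ and $\A_{k+1}$. Your extra remarks on invertibility (a point the paper leaves implicit) and on $\y$-independence are sensible additions but do not change the route.
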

\begin{proof}
We prove this via induction. Starting from the neural dynamics Eq.~\eqref{eq:neuralnoc}, at equilibrium ($\dot \z^{(l)}=0$) we have:
\begin{align*}
    \z^{(l)}= \F^{(l)} \left(\W_b^{(l-1)}\z^{(l-1)} + \e{\W_a^{(l)\top}} \z^{(l+1)}\right).
\end{align*}
Setting $l=n-1$ we get:
\begin{align*}
    \z^{(n-1)}= \F^{(n-1)} \left(\W_b^{(n-2)}\z^{(n-2)} + \e{\W_a^{(n-1)\top}} \y\right),
\end{align*}
which satisfies the proposed form in Eq.~\eqref{eq:latentsol} for $k=1$. Now, we assume that the form in Eq.~\eqref{eq:latentsol} holds for $z^{(n-k)}$, and show that it also holds for $z^{(n-k-1)}$. We have
\begin{align*}
    \z^{(n-k-1)}=& \F^{(n-k-1)}\left(\W_b^{(n-k-2)}\z^{(n-k-2)} + \e{\W_a^{(n-k-1)\top}} \z^{(n-k)}\right)\\
    =& \F^{(n-k-1)}\left(\W_b^{(n-k-2)}\z^{(n-k-2)} + \e{\W_a^{(n-k-1)\top}} \B_k\left(\W_b^{(n-k-1)}\z^{(n-k-1)}+\e^k \A_k{\W_a^{(n-1)\top}} \y\right)\right).
\end{align*}
Collecting $\z^{(n-k-1)}$ and multiplying by the inverse on the left we arrive at:
\begin{align*}
    \z^{(n-k-1)}=&\left(1-\e \F^{(n-k-1)}{\W_a^{(n-k-1)\top}} \B_k\W_b^{(n-k-1)}\right)^{-1}\F^{(n-k-1)} \times \\
    &\hspace{115pt} \times\left(\W_b^{(n-k-2)}\z^{(n-k-2)} + \e^{k+1}{\W_a^{(n-k-1)\top}} \B_k \A_k{\W_a^{(n-1)\top}} \y\right),
\end{align*}
as proposed.
\end{proof}
Note that, whereas Proposition~\ref{prop:forward_weights} holds only in the $\e\to0$ limit, Proposition~\ref{prop:latentsol} holds for general $\e$. Using this result we show that the relationship between the activity, feedback and feedforward inputs of the $l^\text{th}$ module has a simple form given in the following corollary:
\begin{cor}\label{corr:leadingsol}
The leading order in $\e$ of the dependence of $\z^{(l)}$ and $\z^{(l+1)}$ on $\z^{(l-1)}$ and $\y$, which has non-zero dependence on $\y$, is given by
\begin{align*}
    \z^{(l)}
    &= \C_1^{(l)}\z^{(l-1)}
    +\mathcal{O}(\e^{n-l}),\\
    \z^{(l+1)}
    &= \C_2^{(l)}\z^{(l-1)} +\e^{n-l-1}{\W_a^{(l+1)\top}}\cdots{\W_a^{(n-1)\top}} \y+\mathcal{O}(\e^{n-l}),
\end{align*}
where $\C_1^{(l)}$ and $\C_2^{(l)}$ are respectively the expansions of $\B_{n-l}\W_b^{(l-1)}$ and $\B_{n-l-1}\W_b^{(l)}\B_{n-l}\W_b^{(l-1)}$ up to order $\e^{n-l-1}$.
\end{cor}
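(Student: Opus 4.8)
The plan is to read both expansions straight off Proposition~\ref{prop:latentsol}, which already gives the equilibrium latents in closed recursive form, and then control the $\e$-dependence of the recursion coefficients $\A_k,\B_k$ by a short induction. Setting $k=n-l$ in Eq.~\eqref{eq:latentsol} expresses $\z^{(l)}$ directly through $\z^{(l-1)}$ and $\y$,
\[
\z^{(l)}=\B_{n-l}\W_b^{(l-1)}\z^{(l-1)}+\e^{n-l}\B_{n-l}\A_{n-l}\W_a^{(n-1)\top}\y,
\]
while setting $k=n-l-1$ expresses $\z^{(l+1)}$ through $\z^{(l)}$ and $\y$. Substituting the first display into the second yields $\z^{(l+1)}$ purely in terms of $\z^{(l-1)}$ and $\y$, with $\z^{(l-1)}$-coefficient $\B_{n-l-1}\W_b^{(l)}\B_{n-l}\W_b^{(l-1)}$ and two distinct $\y$-contributions: a direct one at order $\e^{n-l-1}$ and an indirect one, inherited from $\z^{(l)}$, at order $\e^{n-l}$.

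The crux is establishing the leading-order forms of $\B_k$ and $\A_k$ in $\e$. Since $\F^{(l)}=(\I+\e\Q^{(l)\top}\Q^{(l)})^{-1}=\I+\mathcal O(\e)$, the base case $\B_1=\F^{(n-1)}=\I+\mathcal O(\e)$ holds, and the recursion $\B_{k+1}=(1-\e\,\F^{(n-k-1)}\W_a^{(n-k-1)\top}\B_k\W_b^{(n-k-1)})^{-1}\F^{(n-k-1)}$ propagates $\B_k=\I+\mathcal O(\e)$ by induction, because the bracketed factor is $\I+\mathcal O(\e)$ whenever $\B_k=\I+\mathcal O(\e)$. Feeding this into $\A_{k+1}=\W_a^{(n-k-1)\top}\B_k\A_k$ gives, again by induction, $\A_k=\W_a^{(n-k)\top}\W_a^{(n-k+1)\top}\cdots\W_a^{(n-2)\top}+\mathcal O(\e)$ for $k\ge2$ with $\A_1=1$. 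In particular $\A_{n-l-1}\W_a^{(n-1)\top}=\W_a^{(l+1)\top}\cdots\W_a^{(n-1)\top}+\mathcal O(\e)$, which is the product appearing in the statement.

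With these in hand the bookkeeping closes. For $\z^{(l)}$ the $\y$-term is $\mathcal O(\e^{n-l})$, so it is absorbed into the remainder while the $\z^{(l-1)}$-coefficient, truncated at order $\e^{n-l-1}$, is exactly $\C_1^{(l)}$, giving $\z^{(l)}=\C_1^{(l)}\z^{(l-1)}+\mathcal O(\e^{n-l})$. For $\z^{(l+1)}$ the $\z^{(l-1)}$-coefficient truncated at order $\e^{n-l-1}$ is $\C_2^{(l)}$; the indirect $\y$-contribution is $\mathcal O(\e^{n-l})$ and hence likewise absorbed, whereas the direct contribution $\e^{n-l-1}\B_{n-l-1}\A_{n-l-1}\W_a^{(n-1)\top}\y$ collapses at leading order to $\e^{n-l-1}\W_a^{(l+1)\top}\cdots\W_a^{(n-1)\top}\y$, the displayed term.

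The main obstacle is purely one of order-tracking rather than any hard estimate: one must verify that for $\z^{(l+1)}$ the direct $\y$-path, reaching level $l+1$ with $n-l-1$ accumulated factors of $\e$, dominates the indirect path routed through $\z^{(l)}$, which picks up one further factor of $\e$ for a total of $n-l$, so that only the former survives explicitly while everything of order $\e^{n-l}$ is legitimately collected into the remainder. Care is also needed to confirm that the $\mathcal O(\e)$ corrections hidden inside $\B_k$ and $\A_k$ never contaminate the claimed leading coefficients, which follows from the two inductions above.
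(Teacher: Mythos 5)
Your proof is correct and follows essentially the same route as the paper's: apply Proposition~\ref{prop:latentsol} with $k=n-l$ and $k=n-l-1$, substitute, and keep only the leading $\e$-dependence of the $\y$-coefficient. Your explicit inductions showing $\B_k=\I+\mathcal O(\e)$ and $\A_k=\W_a^{(n-k)\top}\cdots\W_a^{(n-2)\top}+\mathcal O(\e)$ simply spell out the step the paper compresses into ``dropping all but the leading dependence on $\e$,'' and they are accurate.
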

\begin{proof}
The result follows directly from applying Eq.~\eqref{eq:latentsol} twice and then dropping all but the leading dependence on $\e$ in the coefficient of the $\y$ term.
\end{proof}

Now, we restate and prove Proposition~\ref{prop:forward_weights}, which can be viewed  as a generalization of the results in~\cite{RRR}. Note that in this proposition, the exact form of $\C_1^{(l)}$ and $\C_2^{(l)}$ is not important. All that matters is that both $\z^{(l)}$ and $\z^{(l+1)}$ have a linear dependence on $\z^{(l-1)}$. Since the dependence of $\z^{(l)}$ on $\y$ is subleading compared to the dependence of $\z^{(l+1)}$ on $\y$, this can be dropped in the limit of~$\e\to0$.

\textbf{Proposition 1.} \textit{ Assume that the learning rules (Eqs.~\ref{eq:learning}) are at equilibrium and we receive a new datapoint given by $\x_{T+1}, \y_{T+1}$. For $c^{(l)}=0$ and in the limit of $\e\equiv g_a/g_b\to 0$, the leading term in $\e$ for the forward weight updates $\delta \W_b^{(l-1)}$ for this new sample is given by 
\begin{align*}
     \delta  \W_b^{(l-1)} \propto \v_{a,T+1}^{(l)}{\z_{T+1}^{(l-1)\top}} = &\;   
    \e^{n-l}\left[{\W_a^{(l)\top}\cdots{\W_a^{(n-1)\top}}}(\y_{T+1} - \tilde \y_{T+1})\right]  {\z_{T+1}^{(l-1)\top}}+\mathcal{O}({\e^{n-l+1}}),
\end{align*}
where $\tilde \y_{T+1}=\Y\Z^{(l-1)\top}
        \left( \Z^{(l-1)}\Z^{(l-1)\top}\right)^{-1} \z^{(l-1)}_{T+1}$ is the optimal linear inferred value for $\y_T$.
}

\begin{proof}The statement follows from two applications of the update rule for $\W_b$ given in Eq.~\eqref{eq:W_b} for $\W_b^{(l-1)}$. Putting $c^{(l)}=0$ we have:
\begin{equation}
    \delta  \W_b^{(l-1)} \propto
     \e \bigl(\W_a^{(l)\top}\z^{(l+1)} - \Q^{(l)\top}\Q^{(l)}\z^{(l)}\bigr) {\z^{(l-1)\top}}.
\end{equation}
The equilibrium condition given by $\delta  \W_b=0$ is given by averaging over the entire dataset:
\begin{equation}
    \Q^{(l)\top}\Q^{(l)}\Z^{(l)}\Z^{(l-1)\top} = \W_a^{(l)\top}\Z^{(l+1)}\Z^{(l-1)\top}.
\end{equation}
Now we plug the derived values in Corollary~\ref{corr:leadingsol}. After right-multiplying by $(\Z^{(l-1)}\Z^{(l-1)\top})^{-1}$ we get:
\begin{equation}
    \Q^{(l)\top}\Q^{(l)}\C_1 = \W_a^{(l)\top}\C_2 +  \e^{n-l-1}\W_a^{(l)\top}\cdots{\W_a^{(n-1)\top}} \Y\Z^{(l-1)\top}
    \big(\Z^{(l-1)}\Z^{(l-1)\top}\big)^{-1}.
\end{equation}
    Now, we look at the update for a new sample $T+1$. Plugging the above as well as the equations from Corollary~\ref{corr:leadingsol} into the update rule we see that the $\C_2^{(l)}$ dependent term cancels and we are only left with the term depending on $\y$:
\begin{align*}
    \delta  \W_b^{(l-1)} \propto &
     \e^{n-l}\bigl(\W_a^{(l)\top}\z_{T+1}^{(l+1)} - \Q^{(l)\top}\Q^{(l)}\z_{T+1}^{(l)}\bigr) {\z_{T+1}^{(l-1)\top}}\\
     =&\;   
    \e^{n-l}\left[{\W_a^{(l)\top}\cdots{\W_a^{(n-1)\top}}}(\y_{T+1} - \tilde \y_{T+1})\right]  {\z^{(l-1)\top}},
\end{align*}
with $\tilde \y_{T+1}$ as defined in the proposition.
\end{proof}

\subsection{Proof of Theorem~\ref{thm:pca} and consequences}

In this section we look at some properties of the within-layer computations of CCPC. We will focus on the computations within the $l^\text{th}$ layer and for clarity of notation, for this subsection only, we rename $\Z^{(l)}\to\Z$,  $\Z^{(l+1)}\to\Y$, and  $\Z^{(l-1)}\to\X$. We will assume $\rho^{(l)}=1$ for brevity. The results have straight-forward generalizations for the case where $\rho^{(l)}\neq 1$. 

\textbf{Theorem 1.} \textit{
Let the concatenated matrix $\bXi =\U\boldsymbol\Lambda \V^\top= [ \e^{1/2}\C_x^{-1/2}\X,\Y]\in \R^{(d_X+d_Y)\times T}$ have singular value decomposition $\bXi=\sum_{\alpha=1}^{(d_X+d_Y)} \uvec_\alpha\lambda_\alpha \vvec_\alpha^\top$, with $\{\uvec_\alpha\in\R^{(d_X+d_Y)} \}$ and $\{\vvec_\alpha\in \R^T\}$ both being sets of orthonormal vectors, and singular values $(\lambda_\alpha)$ sorted in decreasing order.  We consider the minimization
\begin{equation}\label{eq:Simplified_PC_2}
\min_{\substack{\Z\in \R^{d\times N}\\\Z\Z^\top \preccurlyeq I_d}}\tr\left[ (c+\e)\Z^\top \Z -\Z^\top\Z\big[\e \X^\top (\X \X^\top)^{-1}\X+\Y^\top \Y\big] \right].
\end{equation}
Then one of the optimal solutions
is given by $\hat \Z=\sum_{\alpha=1}^D \w_\alpha\mathbbm {1}(\lambda_\alpha>c+\e)\vvec_\alpha^\top$ where $\{\w_\alpha\in\R^d \}$ is an arbitrary set of orthonormal vectors, with $D=\min(d,d_X+d_Y)$.
}
\begin{proof}
Let the SVD of $\Z$ be $\W\boldsymbol\Lambda_\Z\V_\Z^\top$. Minimizing over $\Z$ amounts to finding the matrices $\W$, $\boldsymbol\Lambda_\Z$, and $\V_\Z$ that give the smallest  value of the objective.  Because of the constraint $\Z\Z^\top \preccurlyeq I_d$, $\boldsymbol\Lambda_\Z$ satisfies $0\le \lambda_{\Z\alpha}\le 1$. Plugging this form into the optimization objective we have
\begin{equation}\label{eq:objective}
\tr\left[ (c+\e)\Z^\top \Z -\Z^\top\Z\big[\e \X^\top (\X \X^\top)^{-1}\X+\Y^\top \Y\big] \right]=\tr\left[(c+\e)\boldsymbol\Lambda_\Z^\top\boldsymbol\Lambda_\Z-\boldsymbol\Lambda_\Z^\top\boldsymbol\Lambda_\Z\tilde \V^\top\boldsymbol\Lambda^\top\boldsymbol\Lambda\tilde \V^\top\right]\notag
\end{equation}
where $\tilde \V=\V^\top \V_\Z$. Note that $\W$ does not appear in this expression and therefore remains undetermined in this optimization objective. 

$\V$ only appears in the second term and hence can be found by optimizing the second term alone. This term is of the form $-\sum_{\alpha,\beta}\lambda_{\Z\alpha}^2|\tilde V_{\alpha\beta}|^2\lambda_{\beta}^2$. The $\tilde \V$ that minimizes the term is the one that pairs  $\lambda_{\beta}$ and $\lambda_{\Z\alpha}$, in the sorted order. Assuming $(\lambda_{\Z\alpha})$ are also sorted in the decreasing order, this is achieved by setting $\V_\Z=\V$.

Plugging this in, the entire expression simplifies to $$-\sum_\alpha (\lambda_\alpha-c-\e)\lambda_{\Z\alpha}.$$
Since $0\le \lambda_{\Z\alpha}\le 1$, it is clear that, for a minimum,  if $(\lambda_\alpha-c-\e)<0$, $\lambda_{\Z\alpha}=0$, while $(\lambda_\alpha-c-\e)>0$ implies $\lambda_{\Z\alpha}=1$. In case of equality, the choice of $\lambda_{\Z\alpha}$ does not matter. Hence $\lambda_{\Z\alpha}=\mathbbm {1}(\lambda_\alpha>c+\e)$ is a valid minimum. The vectors $\{\w_\alpha\}$ are the columns of $\W$, while $\{\vvec_\alpha\}$ are the columns of $\V$. Hence, the result.
\end{proof}

From this theorem, we derive two corollaries, describing the different limits of $\e\to0$ and $\e\to\infty$. The proofs of these corollaries follow directly from Theorem~\ref{thm:pca} which maps the within-layer computation of CCPC onto a PCA problem, and the known properties of PCA.

\begin{cor}\label{prop:EV_regimes_MSE}
If the input sample covariance matrix, $\C_x$, is full rank, the internal representation $\Z$ in the limit $\e \rightarrow 0$ is driven by the principal components (scores) of $\Y$, i.e.,
\begin{align}\label{prop_eq:mse}
\Z^{*} = \W_a^\top \Y + \e \W_b \X ,\end{align}
with $\W_a$ the top principal directions of $\Y$ and $\W_b$ the maximally correlated directions between $\X$ and the columns of $\W_a$. 
\end{cor}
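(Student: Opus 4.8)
The plan is to start from the closed-form optimum supplied by Theorem~\ref{thm:pca} and then carry out first-order perturbation theory in $\e$. By Theorem~\ref{thm:pca}, the optimal latent is $\hat\Z=\sum_\alpha \w_\alpha\mathbbm{1}(\lambda_\alpha>c+\e)\vvec_\alpha^\top$, where the right singular vectors $\vvec_\alpha$ and squared singular values $\lambda_\alpha^2$ are the eigenvectors and eigenvalues of the $T\times T$ Gram matrix
\begin{equation*}
\M(\e)\equiv \e\,\X^\top(\X\X^\top)^{-1}\X+\Y^\top\Y=\e\,\P_X+\Y^\top\Y,
\end{equation*}
with $\P_X=\X^\top(\X\X^\top)^{-1}\X$ the orthogonal projector onto the row space of $\X$. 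Since $\C_x=\X\X^\top$ is full rank, $\P_X$ is a genuine bounded projection, so $\M(\e)$ is a regular $\mathcal O(\e)$ perturbation of $\Y^\top\Y$, and this is the property I will exploit.

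First I would expand the spectrum of $\M(\e)$ about $\e=0$. Assuming the leading singular values of $\Y$ are non-degenerate, standard eigenvector perturbation theory gives $\vvec_\alpha=\vvec_\alpha^{(Y)}+\e\,\vvec_\alpha^{(1)}+\mathcal O(\e^2)$ and $\lambda_\alpha^2=(\lambda_\alpha^{(Y)})^2+\mathcal O(\e)$, where $\vvec_\alpha^{(Y)}$ are the right singular vectors (PC scores) of $\Y$ and the first-order correction $\vvec_\alpha^{(1)}$ carries all the $\X$-dependence, entering through $\P_X$ acting on the scores of $\Y$. As $\e\to0$ the threshold $c+\e\to c$ selects the singular values of $\Y$ exceeding $c$, so the retained $\vvec_\alpha$ converge to the top PC scores of $\Y$; this already establishes that $\hat\Z$ is driven by the principal components of $\Y$, with the $\X$-contribution appearing only at $\mathcal O(\e)$.

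Next I would read off the weights from the stationarity of the learning rules at equilibrium. Averaging Eq.~\eqref{eq:W_a} gives $\W_a\propto \Y\hat\Z^\top$, which with $\hat\Z$ equal to the orthonormal PC scores of $\Y$ collapses to the top principal directions of $\Y$; and stationarity of the reconstruction term fixes $\W_b=\hat\Z\X^\top\C_x^{-1}$, the whitened cross-correlation, which with $\hat\Z\approx\W_a^\top\Y$ becomes the maximally correlated directions between $\X$ and the columns of $\W_a$. Substituting the $\mathcal O(\e)$ expansion of $\hat\Z$ back through the equilibrium of the neural dynamics, Eq.~\eqref{eq:lin_neural_dynamics}, in which the $\X$-current carries the explicit factor $g_b=\e$, reassembles the two pieces into $\Z^{*}=\W_a^\top\Y+\e\,\W_b\X$.

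The main obstacle I anticipate is the perturbation bookkeeping: I must check that $\P_X$ enters at exactly first order, so that the feedforward current surfaces as $\e\,\W_b\X$ rather than at some other order, which is precisely what the full-rank hypothesis on $\C_x$ guarantees (it keeps $\P_X$ bounded and the expansion regular), together with a spectral-gap/non-degeneracy assumption licensing the eigenvector expansion. A secondary subtlety is normalization: the neural-dynamics equilibrium carries a prefactor $(g_b+c+\Q^\top\Q)^{-1}$, and I would verify that on the active, constraint-saturated (whitened) subspace the Lagrange multiplier $\Q^\top\Q$ adjusts so that this prefactor acts as the identity, leaving the clean form stated in the corollary.
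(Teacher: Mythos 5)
Your proposal is correct and follows essentially the same route as the paper: the paper's own proof is a one-sentence appeal to Theorem~\ref{thm:pca} (which reduces the within-layer computation to thresholded PCA of the Gram matrix $\e\,\X^\top(\X\X^\top)^{-1}\X+\Y^\top\Y$) together with ``known properties of PCA,'' which is precisely the reduction you start from. Your eigenvector perturbation in $\e$, the identification of $\W_a$ and $\W_b$ from the equilibrium of the learning rules, and the normalization caveat about the $(\e+\Q^\top\Q)^{-1}$ prefactor are details the paper leaves implicit but are consistent with, and more careful than, its stated argument.
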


\begin{cor}\label{prop:EV_regimes_CCA}
If the input sample covariance matrix, $\C_x$, is full rank, the internal representation $\Z$ in the limit $\e \rightarrow \infty$ is driven by the maximally correlated directions between $\C_x^{-1/2}\X$ and $\Y$, i.e.,
\begin{align}\label{prop_eq:cca}
\Z^{*} = \W_a^\top \Y + \W_b\X ,\end{align}
with $\W_a, \W_b$ projecting onto a common lower-dimensional subspace so that the projections are maximally correlated. 
\end{cor}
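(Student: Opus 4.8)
The plan is to read off the corollary from Theorem~\ref{thm:pca} by analysing the $\e\to\infty$ asymptotics of the Gram matrix whose eigenvectors supply the optimal representation. The first step is the bridge to Theorem~\ref{thm:pca}: the bracketed matrix in~\eqref{eq:Simplified_PC_2} is exactly $\M=\bXi^\top\bXi=\e\,\X^\top(\X\X^\top)^{-1}\X+\Y^\top\Y$, so its eigenvalues are the $\lambda_\alpha^2$ and its eigenvectors are the right singular vectors $\vvec_\alpha$ of $\bXi$ appearing in $\hat\Z$. Writing $\P_X=\X^\top(\X\X^\top)^{-1}\X=\X^\top\C_x^{-1}\X$ for the orthogonal projector onto $\mathrm{row}(\X)$ (genuinely rank $d_X$ because $\C_x$ is full rank), we have $\M=\e\,\P_X+\Y^\top\Y$, and the whole statement reduces to tracking the top eigenspace of $\M$ and the threshold $c+\e$ as $\e\to\infty$.

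Next I would split $\R^T=\mathrm{row}(\X)\oplus\mathrm{row}(\X)^\perp$ and write $\M$ in $2\times2$ block form: the $(1,1)$ block is $\e\,\I+\P_X\Y^\top\Y\P_X|_{\mathrm{row}(\X)}$ while the off-diagonal and $(2,2)$ blocks are $O(1)$. Degenerate perturbation theory (equivalently a Schur-complement estimate) then gives $d_X$ large eigenvalues $\lambda_\alpha^2=\e+\nu_\alpha+O(1/\e)$ whose eigenvectors lie in $\mathrm{row}(\X)$ to leading order, where the $\nu_\alpha$ are the eigenvalues of $A:=\P_X\Y^\top\Y\P_X$ restricted to $\mathrm{row}(\X)$; the remaining eigenvalues stay $O(1)$. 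The clean algebraic core is the identity that the nonzero eigenvalues of $A$ equal those of $\C_x^{-1/2}\C_{xy}\C_{yx}\C_x^{-1/2}$, with $\C_{xy}=\X\Y^\top$ and $\C_{yx}=\Y\X^\top$ (obtained by inserting $\P_X=\X^\top\C_x^{-1}\X$ and using the $AB\sim BA$ spectrum argument). These are precisely the squared canonical correlations between the whitened input $\C_x^{-1/2}\X$ and $\Y$.

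The threshold then performs the selection. Since $\lambda_\alpha^2>c+\e$ is equivalent to $\nu_\alpha>c$ up to $O(1/\e)$, exactly the above-threshold canonical modes survive while every $O(1)$ (purely $\Y$-driven) eigenvalue falls below $c+\e$ once $\e$ is large; substituting the corresponding $\vvec_\alpha\in\mathrm{row}(\X)$ into $\hat\Z=\sum_\alpha\w_\alpha\,\mathbbm{1}(\lambda_\alpha^2>c+\e)\,\vvec_\alpha^\top$ shows $\hat\Z$ is supported on the canonical subspace. Finally I would translate this into the weight statement using the learning-rule equilibria~\eqref{eq:learning}: at equilibrium $\W_a\propto\langle\z^{(l+1)}\z^{(l)\top}\rangle$ and $\Q$ whitens $\Z$ on its support, and the $\W_b$ update together with the neural fixed point~\eqref{eq:lin_neural_dynamics} force both $\W_a^\top\Y$ and $\W_b\X$ to project onto that same canonical subspace, giving the asserted form $\Z^{*}=\W_a^\top\Y+\W_b\X$ with maximally correlated projections. (Corollary~\ref{prop:EV_regimes_MSE} is the mirror computation: for $\e\to0$ one has $\M\to\Y^\top\Y$, the surviving modes are the principal components of $\Y$, and the $\X$-contribution appears only at order $\e$.)

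The step I expect to be hardest is the degenerate perturbation analysis: because $\e\P_X$ carries a $d_X$-fold degenerate eigenvalue $\e$, naive first-order perturbation theory does not apply and one must diagonalise $\Y^\top\Y$ within the degenerate subspace and bound the $O(1/\e)$ leakage of the eigenvectors out of $\mathrm{row}(\X)$ uniformly enough to make the comparison $\nu_\alpha>c$ exact in the limit. The spectral identity linking the $\nu_\alpha$ to the canonical correlations is routine, but converting ``$\hat\Z$ lives in the canonical subspace'' into an explicit pair $(\W_a,\W_b)$ that \emph{jointly} projects onto a common subspace requires carefully invoking the three coupled weight equilibria, and most of the remaining care will go into that bookkeeping.
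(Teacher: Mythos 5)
Your proposal is correct and takes essentially the same route as the paper: the paper's entire proof of this corollary is the single remark that it ``follows directly from Theorem~\ref{thm:pca}, which maps the within-layer computation of CCPC onto a PCA problem, and the known properties of PCA,'' and your argument is precisely that derivation with the details made explicit (the large-$\e$ spectral analysis of $\e\,\X^\top(\X\X^\top)^{-1}\X+\Y^\top\Y$, the identification of the surviving modes with the correlated directions of $\C_x^{-1/2}\X$ and $\Y$, and the thresholding at $c+\e$). The one step you flag as delicate---translating the thresholded-PCA solution $\hat\Z$ into the weight form $\Z^{*}=\W_a^\top\Y+\W_b\X$ via the equilibria of Eqs.~\eqref{eq:learning} and \eqref{eq:lin_neural_dynamics}---is indeed left informal, but the paper's own statement and proof are equally informal on exactly that point, so this is not a gap relative to the paper.
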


\subsection{Rank structure of CCPC}
In this section again we set $c^{(l)}=0$ and $\rho^{(l)}=1$, and assume $g_a=1,g_b=\epsilon$ is the same for all layers. The generalization to other cases is straightforward. We demonstrate a series of rank constraints on the intermediate layers of CCPC. The first of these is very general and requires no assumptions. The following statements become gradually more tight but also require more assumptions.

In this section, we define $r_x$ and $r_y$ to be the ranks of the the covariances of $\X$ and $\Y$, and $r^{(l)}$ to be the rank of the covariance of $\Z^{(l)}$ at the equilibrium of the CCPC update equations. We take $r^{(0)}=r_x$ and $r^{(n)}=r_y$.

\begin{cor}
The rank of the $l^\text{th}$ layer is bounded by $r^{(l+1)}+r^{(l-1)}$:
\begin{equation}
    \forall l\; : \; r^{(l)}\leq r^{(l+1)}+r^{(l-1)}.
\end{equation}
\end{cor}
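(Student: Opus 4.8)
The plan is to read off the rank bound directly from the equilibrium form of the latent variables established in Proposition~\ref{prop:latentsol} and Corollary~\ref{corr:leadingsol}. Setting $c^{(l)}=0$, $\rho^{(l)}=1$, and $g_a=1$, $g_b=\e$ uniformly, the equilibrium condition $\dot\z^{(l)}=0$ in Eq.~\eqref{eq:neuralnoc} gives, for each sample,
\begin{equation*}
    \z^{(l)} = \F^{(l)}\bigl(\W_b^{(l-1)}\z^{(l-1)} + \e\,\W_a^{(l)\top}\z^{(l+1)}\bigr),
\end{equation*}
with $\F^{(l)}=(\I+\e\,\Q^{(l)\top}\Q^{(l)})^{-1}$ invertible. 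Stacking this relation across all $T$ samples into the data-matrix form yields
\begin{equation*}
    \Z^{(l)} = \F^{(l)}\W_b^{(l-1)}\Z^{(l-1)} + \e\,\F^{(l)}\W_a^{(l)\top}\Z^{(l+1)}.
\end{equation*}
This is the key structural fact: at equilibrium the $l^\text{th}$ layer activity is an affine image of the two adjacent layer activities, with no residual free component.

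From this identity the bound follows from elementary linear algebra. First I would note that $\operatorname{rank}(\Z^{(l)}) \le \operatorname{rank}(\F^{(l)}\W_b^{(l-1)}\Z^{(l-1)}) + \operatorname{rank}(\e\,\F^{(l)}\W_a^{(l)\top}\Z^{(l+1)})$ by subadditivity of rank. Then $\operatorname{rank}(\F^{(l)}\W_b^{(l-1)}\Z^{(l-1)}) \le \operatorname{rank}(\Z^{(l-1)})$ and similarly for the other term, since multiplying on the left by any matrix cannot increase rank. Finally I would translate this into a statement about the \emph{covariance} ranks $r^{(l)}$: for a centered data matrix, the rank of $\tfrac1T\Z^{(l)}\Z^{(l)\top}$ equals the row rank of $\Z^{(l)}$, so $r^{(l)}=\operatorname{rank}(\Z^{(l)})$ (assuming the data are centered, or equivalently absorbing a mean-subtraction), giving $r^{(l)} \le r^{(l-1)} + r^{(l+1)}$ as claimed.

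The main obstacle I anticipate is not the rank arithmetic, which is routine, but rather justifying that the covariance rank and the matrix rank coincide under the centering conventions used in the paper, and confirming that the equilibrium relation holds exactly in data-matrix form for every sample simultaneously (the neural dynamics is solved per-sample, so one must check that the same weight matrices $\W_a^{(l)}$, $\W_b^{(l-1)}$, $\Q^{(l)}$ apply across all columns, which they do since weights are fixed during the fast neural dynamics). One subtlety worth flagging is the boundary cases $l=1$ and $l=n-1$, where one of the adjacent layers is the clamped input $\X$ or output $\Y$; here the identity still holds with $r^{(0)}=r_x$ and $r^{(n)}=r_y$ as stipulated, so the bound extends uniformly to all $l$. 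Because this first corollary requires no assumptions beyond equilibrium, I would keep the argument at the level of the affine equilibrium identity plus subadditivity, deferring any sharper structural claims (such as those exploiting the whitening constraint) to the subsequent, more assumption-heavy corollaries.
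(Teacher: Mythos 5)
Your proof is correct, and it takes a genuinely different route from the paper's. The paper disposes of this corollary in one line by invoking Theorem~\ref{thm:pca}: at the CCPC equilibrium, $\Z^{(l)}$ is the whitened PCA of the concatenated matrix built from $\Z^{(l-1)}$ and $\Z^{(l+1)}$, so its rank cannot exceed the rank of that concatenation, which is at most $r^{(l-1)}+r^{(l+1)}$. You instead start from the fixed point of the fast neural dynamics, Eq.~\eqref{eq:neuralnoc}, stack the per-sample equilibrium identity $\z^{(l)}_t=\F^{(l)}\bigl(\W_b^{(l-1)}\z_t^{(l-1)}+\e\,\W_a^{(l)\top}\z_t^{(l+1)}\bigr)$ over the $T$ samples (valid, since the weights are common to all samples), and finish with rank subadditivity and the fact that left multiplication cannot increase rank. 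Both arguments are sound, and the centering issue you flag is moot: the paper's ``covariance'' $\tfrac1T\Z^{(l)}\Z^{(l)\top}$ is the uncentered second moment, whose rank equals $\operatorname{rank}\Z^{(l)}$ unconditionally. Your route is more elementary and slightly more general---it uses only neural-dynamics equilibrium with \emph{arbitrary} fixed weights, whereas Theorem~\ref{thm:pca} characterizes the solution after the weights $\W_a,\W_b$ have also been optimized out---so your bound holds at any neural fixed point, optimal or not. What the paper's route buys is uniformity with the rest of the section: the same eigenvalue-thresholding picture of Theorem~\ref{thm:pca} directly yields the tighter, assumption-laden bounds of the subsequent proposition (where whitening of adjacent layers and the threshold cut the rank down further), which plain subadditivity cannot reach. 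One notational slip worth fixing: you announce the section's convention $g_a=1$, $g_b=\e$, but the equilibrium identity you then write, with $\F^{(l)}=(\I+\e\,\Q^{(l)\top}\Q^{(l)})^{-1}$ and the factor $\e$ on the top-down term, is in the convention of Proposition~\ref{prop:forward_weights}, namely $\e=g_a/g_b$; this is immaterial to the rank argument, since in either convention $\Z^{(l)}$ is an invertible matrix times a sum of linear images of the two adjacent layers' activities.
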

\begin{proof}
This follows directly from Theorem~\ref{thm:pca} which describes $Z^{(l)}$ as the whitened PCA of the concatenation of $\Z^{(l-1)}$ and $\Z^{(l+1)}$.
\end{proof}

When the adjacent layers are whitened, i.e., the non-zero eigenvalues of $Z^{(l+1)}Z^{(l+1)\top}$ and $Z^{(l-1)}Z^{(l-1)\top}$ are all one, we can prove a tighter form of the rank constraint. This is relevant for the intermediate layers of CCPC other than the final hidden layer. This proposition is relevant for the final hidden layer only if the supervisory input $\Y$ is white.

\begin{prop}
Let $r_c^{(l)}$ be the rank of the covariance matrix $\Z^{(l-1)}\Z^{(l+1)\top}$. The rank of intermediate layers $l\leq n-2$ is given by:
\begin{equation}\label{eq:rank}
    \forall l\leq n-2 \; : \; r^{(l)} \leq 
        \begin{cases}
        r_c^{(l)}\leq \min \{r^{(l+1)},r^{(l-1)}\} & \e\geq1, \\
        r^{(l+1)} & \e<1 .
        \end{cases}
\end{equation}
\end{prop}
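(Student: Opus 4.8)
The plan is to turn the rank count into a spectral question about a weighted sum of two orthogonal projections, and then read off the eigenvalues through principal angles. First I would apply Theorem~\ref{thm:pca} to the $l^\text{th}$ module with the substitutions $\X=\Z^{(l-1)}$, $\Y=\Z^{(l+1)}$ and $c=0$, $\rho=1$, $g_a=1$, $g_b=\e$. By that theorem the solution $\hat\Z^{(l)}$ retains exactly the modes whose Gramian eigenvalue exceeds the threshold $c+\e=\e$, so $r^{(l)}$ equals the number of eigenvalues of the $T\times T$ Gramian $\G:=\e\,\X^\top(\X\X^\top)^{-1}\X+\Y^\top\Y$ that lie strictly above $\e$, capped by $D=\min(d,d_X+d_Y)$. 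The term $P_X:=\X^\top(\X\X^\top)^{-1}\X$ is the orthogonal projection onto $U:=\mathrm{row}(\Z^{(l-1)})$ regardless of whitening, and here the whitening hypothesis enters crucially: since the nonzero eigenvalues of $\Z^{(l+1)}\Z^{(l+1)\top}$ are all one, $P_Y:=\Y^\top\Y$ is the orthogonal projection onto $V:=\mathrm{row}(\Z^{(l+1)})$. The restriction $l\le n-2$ is precisely what forces $\Z^{(l+1)}$ to be a genuine hidden layer, hence whitened by the form of the optimizer in Theorem~\ref{thm:pca}. Thus $r^{(l)}$ counts the eigenvalues of $\G=\e P_X+P_Y$ exceeding $\e$.

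Next I would diagonalize $\G=\e P_X+P_Y$ via the simultaneous block decomposition of two subspaces (Halmos' two-subspaces theorem / the CS decomposition through principal angles). This orthogonally splits $\R^T$ into $U\cap V$, $U\cap V^\perp$, $U^\perp\cap V$, $U^\perp\cap V^\perp$, and a generic part that decomposes into $2$-dimensional $\G$-invariant blocks, one per principal angle $\theta\in(0,\pi/2)$. With $d_{UV}=\dim(U\cap V)$, $d_U=\dim(U\cap V^\perp)$, $d_V=\dim(U^\perp\cap V)$, and $p$ the number of generic blocks, one has $r^{(l-1)}=\dim U=d_{UV}+d_U+p$ and $r^{(l+1)}=\dim V=d_{UV}+d_V+p$. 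On these pieces the eigenvalues of $\G$ are $\e+1$ (on $U\cap V$), $\e$ (on $U\cap V^\perp$), $1$ (on $U^\perp\cap V$), $0$ (on $U^\perp\cap V^\perp$), while each generic block contributes the two roots $\mu_\pm=\tfrac12\big[(\e+1)\pm\sqrt{(\e+1)^2-4\e\sin^2\theta}\big]$ of its $2\times2$ block (trace $\e+1$, determinant $\e\sin^2\theta$). Comparing against the strict threshold $\e$: the value $\e+1$ always passes, $\e$ never passes (this is exactly where strictness matters), $0$ never passes, and $1$ passes iff $\e<1$; a short computation shows $\mu_+>\e$ for every generic block while $\mu_-\le\e$ always, with equality only at $\theta=\pi/2$, which is excluded from the generic part. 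Hence each generic block contributes exactly one mode above threshold.

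Collecting counts then gives $r^{(l)}=d_{UV}+p$ for $\e\ge1$ and $r^{(l)}=d_{UV}+d_V+p=r^{(l+1)}$ for $\e<1$ (with equality generically; the cap at $D$, and any eigenvalue landing exactly at $\e$, only weaken these to the stated inequalities). To finish I would identify $d_{UV}+p$ with $r_c^{(l)}$: factoring $\Z^{(l-1)}=A_X R_X^\top$ and $\Z^{(l+1)}=A_Y R_Y^\top$ with $R_X,R_Y$ orthonormal bases of $U,V$ and $A_X,A_Y$ of full column rank gives $\mathrm{rank}(\Z^{(l-1)}\Z^{(l+1)\top})=\mathrm{rank}(R_X^\top R_Y)$, whose nonzero singular values are exactly the nonzero principal cosines, numbering $d_{UV}+p$; thus $r_c^{(l)}=d_{UV}+p$. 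Since $d_{UV}+p\le d_{UV}+d_U+p=r^{(l-1)}$ and $\le d_{UV}+d_V+p=r^{(l+1)}$, we obtain $r_c^{(l)}\le\min\{r^{(l-1)},r^{(l+1)}\}$, yielding $r^{(l)}\le r_c^{(l)}\le\min\{r^{(l+1)},r^{(l-1)}\}$ for $\e\ge1$ and $r^{(l)}\le r^{(l+1)}$ for $\e<1$.

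The main obstacle I anticipate is the two-projection spectral step together with the boundary bookkeeping: establishing the invariant $2\times2$ block structure cleanly through principal angles, and then handling the threshold comparisons at the edges---the eigenvalue exactly equal to $\e$ on $U\cap V^\perp$, and $\mu_-\to\e$ as $\theta\to\pi/2$---so that the strict inequality of Theorem~\ref{thm:pca} produces precisely the stated ranks rather than an off-by-$d_U$ count. Everything else reduces to routine linear algebra.
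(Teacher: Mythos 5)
Your proposal is correct and takes essentially the same approach as the paper: both use Theorem~\ref{thm:pca} to reduce the rank bound to counting the eigenvalues of the Gramian $\e\,\Z^{(l-1)\top}\bigl(\Z^{(l-1)}\Z^{(l-1)\top}\bigr)^{-1}\Z^{(l-1)}+\Z^{(l+1)\top}\Z^{(l+1)}$ that strictly exceed the threshold $\e$. The paper's own proof stops at that reduction, asserting the resulting count and deferring details to a cited prior work, whereas your two-projection/principal-angle block decomposition and your handling of the boundary cases (the modes sitting exactly at $\e$, and $\mu_-$ approaching $\e$) supply precisely the bookkeeping the paper omits, and they do so correctly.
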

\begin{proof}
This is a slight generalization of~\cite{BioCCA}. It follows from the eigenvalue structure of the concatenated data matrix given in Theorem~\ref{thm:pca}, where only a subset of the eigenvalues exceed the threshold given by $\e$. The number of eigenvalues that exceed this threshold is given by Eq.~\eqref{eq:rank}.
\end{proof}

Applying this proposition to all the layers of a CCPC network, we arrive at the following corollary:
\begin{cor}
\label{app:cor:rank_bioccpc}
Assume that $\Y$ is white and $\e\geq 1$. In a CCPC network with $c^{(l)}\geq 0$, $\rho^{(l)}=1$, and layer widths $w^{(l)}$, the rank of the latent variables $\z^{(l)}$ is less than or equal to $\min \{r_x,r_y,w^{(1)},\cdots,,w^{(n-1)}\}$.
\end{cor}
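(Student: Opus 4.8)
The plan is to derive the bound by combining two ingredients: the trivial observation that each layer's covariance lives in a space whose dimension is the layer width, and the monotone rank inequalities supplied by the preceding proposition, Eq.~\eqref{eq:rank}. The first step is to check that the hypotheses of \eqref{eq:rank} actually hold at every hidden layer. By Theorem~\ref{thm:pca}, the equilibrium solution for each intermediate layer has the form $\hat\Z=\sum_\alpha \w_\alpha \mathbbm{1}(\lambda_\alpha>c+\e)\vvec_\alpha^\top$ with orthonormal $\{\w_\alpha\}$ and $\{\vvec_\alpha\}$, so $\Z^{(l)}\Z^{(l)\top}$ is an orthogonal projector and its nonzero eigenvalues all equal one; that is, every such layer is automatically whitened on its support. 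Together with the assumption that $\Y$ is white, this guarantees the adjacent-layer-whitened hypothesis of \eqref{eq:rank} is met for all $l\in\{1,\dots,n-1\}$, including the final hidden layer $l=n-1$, where the neighbour $\Z^{(n)}=\Y$ is white by assumption.

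With these hypotheses in place and $\e\geq 1$, the proposition gives $r^{(l)}\leq\min\{r^{(l+1)},r^{(l-1)}\}$ at every hidden layer, and hence the two separate inequalities $r^{(l)}\leq r^{(l-1)}$ and $r^{(l)}\leq r^{(l+1)}$. Iterating the first down toward the input yields the chain $r^{(l)}\leq r^{(l-1)}\leq\cdots\leq r^{(0)}=r_x$, and iterating the second up toward the output yields $r^{(l)}\leq r^{(l+1)}\leq\cdots\leq r^{(n)}=r_y$, so $r^{(l)}\leq\min\{r_x,r_y\}$ for every hidden layer. The boundary identifications $r^{(0)}=r_x$ and $r^{(n)}=r_y$ are exactly the ones fixed at the start of the rank-structure section.

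It remains to fold in the width bounds. For any layer, $\Z^{(l)}\Z^{(l)\top}$ is a $w^{(l)}\times w^{(l)}$ matrix, so $r^{(l)}\leq w^{(l)}$ trivially. To upgrade this to a bound by \emph{every} width I would reuse the chains: for a target index $m<l$ write $r^{(l)}\leq r^{(l-1)}\leq\cdots\leq r^{(m)}\leq w^{(m)}$, and for $m>l$ write $r^{(l)}\leq r^{(l+1)}\leq\cdots\leq r^{(m)}\leq w^{(m)}$. Taking the minimum over all $m\in\{1,\dots,n-1\}$ together with the $r_x$ and $r_y$ bounds gives $r^{(l)}\leq\min\{r_x,r_y,w^{(1)},\dots,w^{(n-1)}\}$, as claimed.

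The main obstacle is not the chaining arithmetic, which is routine, but justifying that \eqref{eq:rank} may be applied simultaneously at every layer. This is a self-consistency issue: the proposition's hypothesis for layer $l$ demands that layers $l\pm1$ be whitened, and the resolution is precisely the observation that Theorem~\ref{thm:pca} forces every intermediate layer's covariance to be a projector, so all layers are whitened on their supports at once. I would also be careful at the two endpoints, verifying that the $\e\geq 1$ branch of \eqref{eq:rank} (rather than the $\e<1$ branch) is the relevant one, and noting that the role of $c^{(l)}\geq 0$ is only to keep the threshold $c+\e\geq 1$, which preserves both the projector structure and the direction of the inequalities.
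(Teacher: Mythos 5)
Your proposal takes essentially the same route as the paper: the paper's entire proof of this corollary is "apply the rank proposition, Eq.~\eqref{eq:rank}, at every layer of the network," and your chaining of $r^{(l)}\leq\min\{r^{(l+1)},r^{(l-1)}\}$ down to $r^{(0)}=r_x$, up to $r^{(n)}=r_y$, and across to each width $w^{(m)}$ is exactly that argument spelled out. The additional details you supply---that each hidden layer is whitened on its support at equilibrium by Theorem~\ref{thm:pca}, so the proposition's hypotheses hold self-consistently at every layer, and that whiteness of $\Y$ covers the final hidden layer---are left implicit in the paper but are correct and consistent with its reasoning.
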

In other words, the rank of the latent variables is bounded by the minimum of the input/output ranks and the bottleneck of the network.

\subsection{Alignment and relation to CCA}

In the limit where the internal loss terms in Eq.~\eqref{eq:obj_doubled} are zero, we effectively have a deep linear network~\cite{saxe2013exact}. A lot is known about deep linear networks, trained by gradient descent of weights. One of the remarkable properties of such networks is that the weights remain balanced, namely $\W^{(l)\top}\W^{(l)}-\W^{(l-1)}\W^{(l-1)\top}$ remains a constant, under gradient descent. In the case this difference is zero, the left singular vectors of $\W^{(l-1)}$ match the right singular vectors of $\W^{(l)}$~\citeSM{ji2019gradient}. In addition, for two class linear discriminants, $\W^{(l-1)}$ ultimately become rank 1, under gradient descent~\citeSM{ji2019gradient}.


\section{Numerical simulations}\label{app:supp_exp}

\begin{figure}[!tbh]
    \center\includegraphics{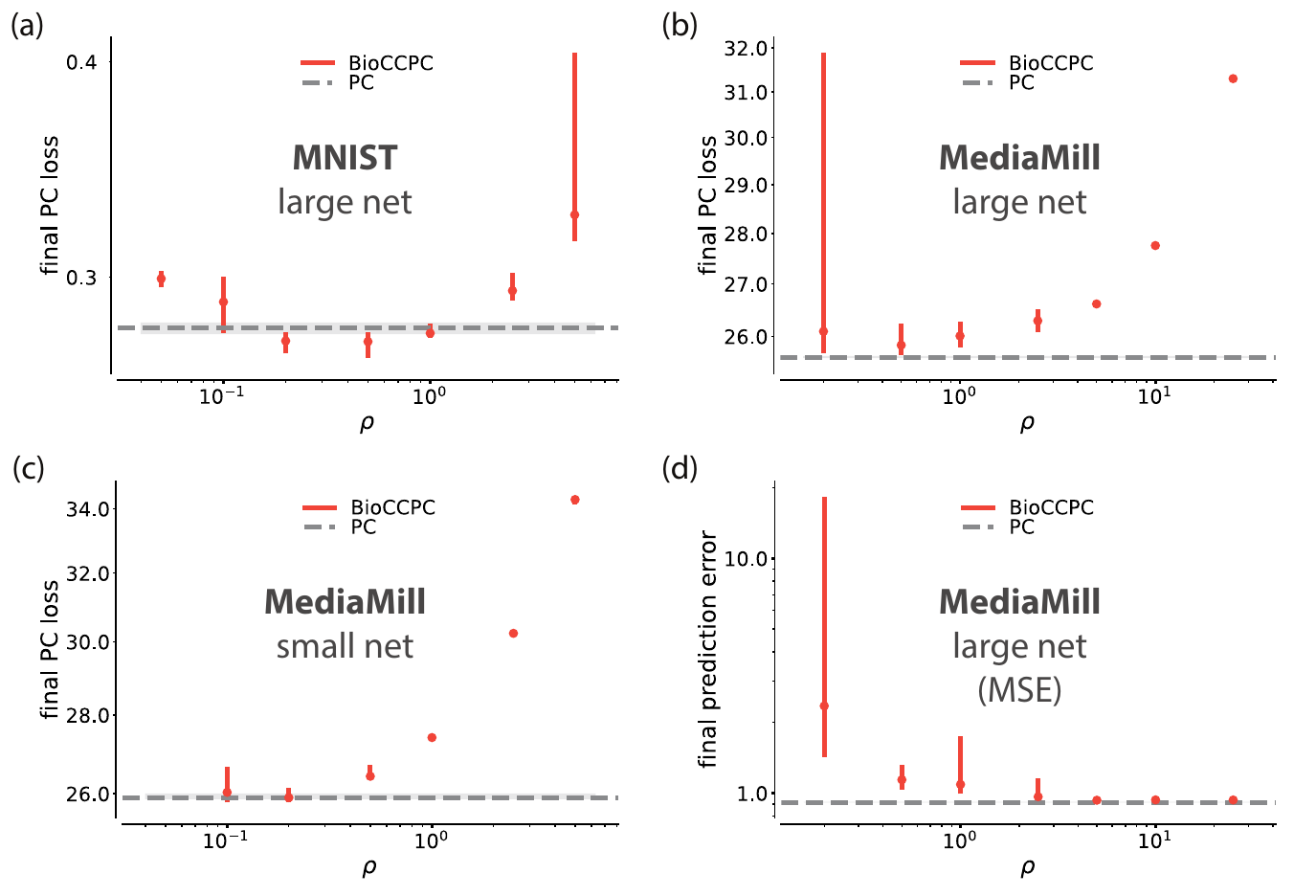}
    \caption{Dependence of BioCCPC performance on constraint scale. Unless otherwise noted, this is the predictive-coding loss (Eq.~\ref{eq:obj_doubled}) after 500,000 iterations. (a) A large network with two hidden layers of sizes 50 and 5 trained on MNIST. (b) The same large network as in (a) run on the Mediamill dataset. See text. (c) A small net (single hidden layer of size 5) on Mediamill. (d) Mean-squared prediction error for the large net on Mediamill, showing that the best prediction performance may occur at a different choice of $\rho$ compared to the lowest predictive-coding loss. \label{fig:app:constraint_scale}}
\end{figure}

\subsection{Effect of constraint scale}
The BioCCPC network uses an inequality constraint on the covariance of the hidden variables, $\frac 1T \Z^{(l)} Z^{(l)\top} \preceq \rho^{(l)}\I$, and this constraint is saturated at convergence.\footnote{More precisely, the constraint is saturated only within the subspace that $\Z^{(l)}$ occupies. In general, this is only the case in the layer with the smallest dimension, as the activity in the other layers is restricted to a subspace of dimension given by the smallest layer; see Corollary~\ref{app:cor:rank_bioccpc}.} The scale, $\rho^{(l)}$, is therefore important: a scale that is too small or too big forces the hidden-layer activities into a range that can hinder learning, as seen in Fig.~\ref{fig:app:constraint_scale}a. Interestingly, the BioCCPC algorithm with a well-chosen constraint scale converges faster than a plain PC network, that is, it can reach a lower PC loss given the same number of training samples (Fig.~\ref{fig:app:constraint_scale}a).

The predictive-coding loss, Eq.~\eqref{eq:obj_doubled}, penalizes prediction errors in all layers. If we think about the task as supervised learning, however, it might make more sense to focus only on the accuracy of the prediction in the last layer. The best output prediction error may require a different constraint scale compared to the lowest PC loss, cf.~Fig.~\ref{fig:app:constraint_scale}b and Fig.~\ref{fig:app:constraint_scale}d.

\subsection{Effect of dataset choice}
\begin{figure}[!tbh]
    \center\includegraphics{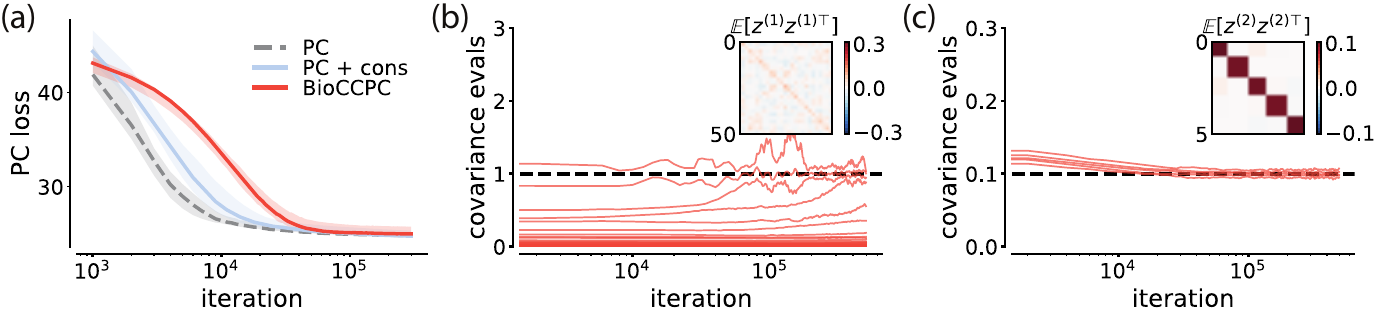}
    \caption{Comparison of our algorithm with PC on the Mediamill dataset. Simulations run using a network with two hidden layers, sizes 50 and 5. Shaded area represents the variation over 10 runs. (a) Evolution of the predictive-coding loss, Eq.~\eqref{eq:obj_doubled}, on a validation set during learning.
    (b) and (c) Evolution of the eigenvalues of the covariance matrix of hidden-layer activations in the two hidden layers. The dashed lines show the scale of the constraint; see SM Sec.~\ref{app:upperbound}. Inset: covariance matrix at end of training. Note how activity is whitened in the second hidden layer (c), and low-rank in the first hidden layer (b). Note also that the network performance is good even though the constraint is not yet fully converged.\label{fig:app:mmill}}
\end{figure}

While we derived our algorithm by starting from the supervised PC framework of \cite{Whittington2017}, the same derivation would hold for the unsupervised or self-supervised PC networks. In these settings, PC can be thought of as a framework which combines information from the inputs at the first and last layer, rather than a method for predicting an output from an input. In the linear case, this has much in common with biologically plausible circuits that perform canonical correlation \mbox{analysis}~(CCA)~\cite{RRR,BioCCA}. We therefore tested the algorithm on Mediamill~\citeSM{snoek2006challenge}, a dataset consisting of two views of the same scene, one given by the video and the other by text annotation. The results (Figs.~\ref{fig:app:constraint_scale}b-d and Fig.~\ref{fig:app:mmill}) show similar trends as those obtained on MNIST. In particular, an appropriate choice of constraint scale leads to results that are competitive with less biologically plausible implementations of PC.

In Fig.~\ref{fig:app:extra} we show the evolution of the validation-set loss for four other datasets: FashionMNIST, CIFAR10, CIFAR100, and the LFW (Labeled Faces in the Wild).  The first three are supervised-learning scenarios and we use a network similar to the one we employed for MNIST. The LFW dataset contains pairs of views of the same subject, and so we employ a symmetric autoencoder-like architecture for this self-supervised task. As before, we see that the BioCCPC algorithm converges to a solution that has predictive-coding loss comparable to the minimum obtained from traditional PC implementations (see Fig.~\ref{fig:app:extra}).

\begin{figure}[!tbh]
    \center\includegraphics{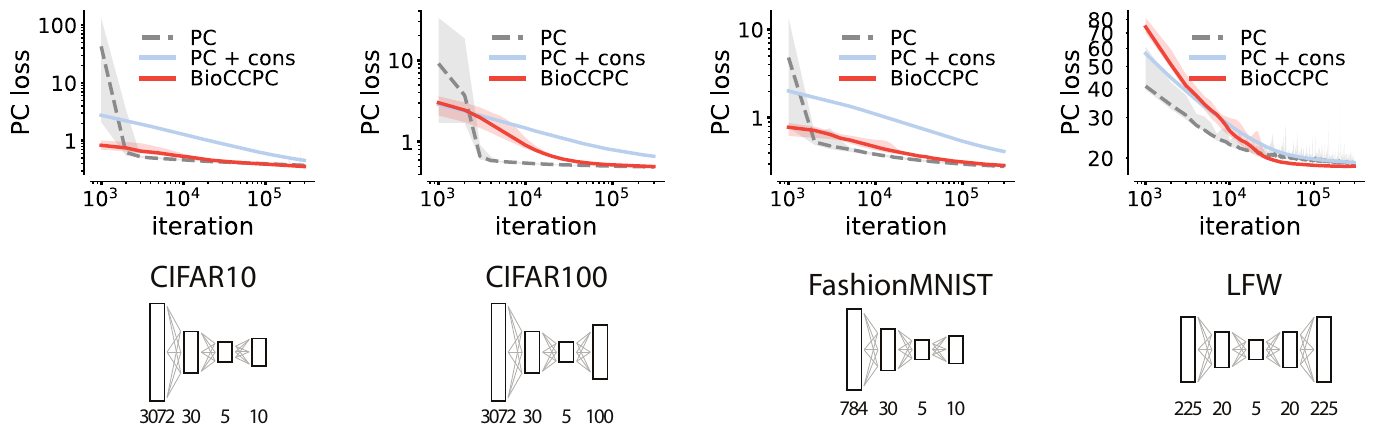}
    \caption{Comparison of our algorithm with PC on a variety of datasets. The name of the dataset and the architecture we employed are shown below each figure. Shaded area represents the variation over 10 runs. \label{fig:app:extra}}
\end{figure}

\subsection{Network architecture}
We tested BioCCPC across different architectures, such as a shallow network with a small hidden-layer dimension (Fig.~\ref{fig:app:constraint_scale}) and a deeper network with two hidden layers, one of which is large-dimensional (Fig.~\ref{fig:app:constraint_scale}a,b,d). In both cases BioCCPC compares well to PC.

\subsection{Lack of symmetry in feedforward and feedback weights}
One of the difficulties with a biological implementation of predictive coding is that it requires symmetric weights across layers. This is not so much an issue of plausibility---indeed, symmetric weights can be obtained simply by using the same Hebbian rules to train both feedforward and feedback connections---but rather a problem with realism: weights in many areas of the brain are not in fact seen to be symmetric.
\begin{figure}[!tbh]
    \center\includegraphics{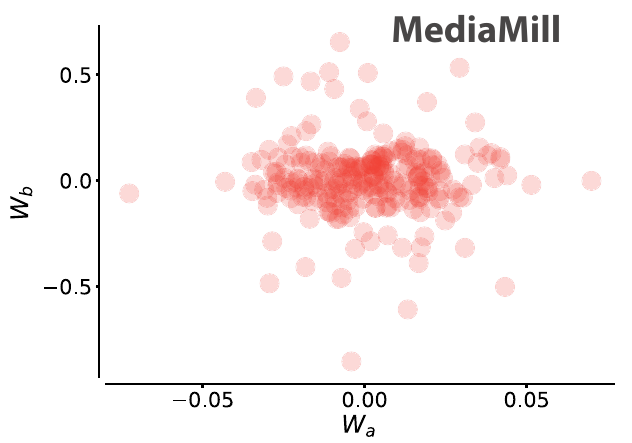}
    \caption{Comparison of feedforward and feedback weights after learning on a network with hidden layers of sizes 50 and 5 trained on the Mediamill dataset. \label{fig:app:weight_asymmetry_mmill}}
\end{figure}

\begin{figure}[!tbh]
    \center\includegraphics{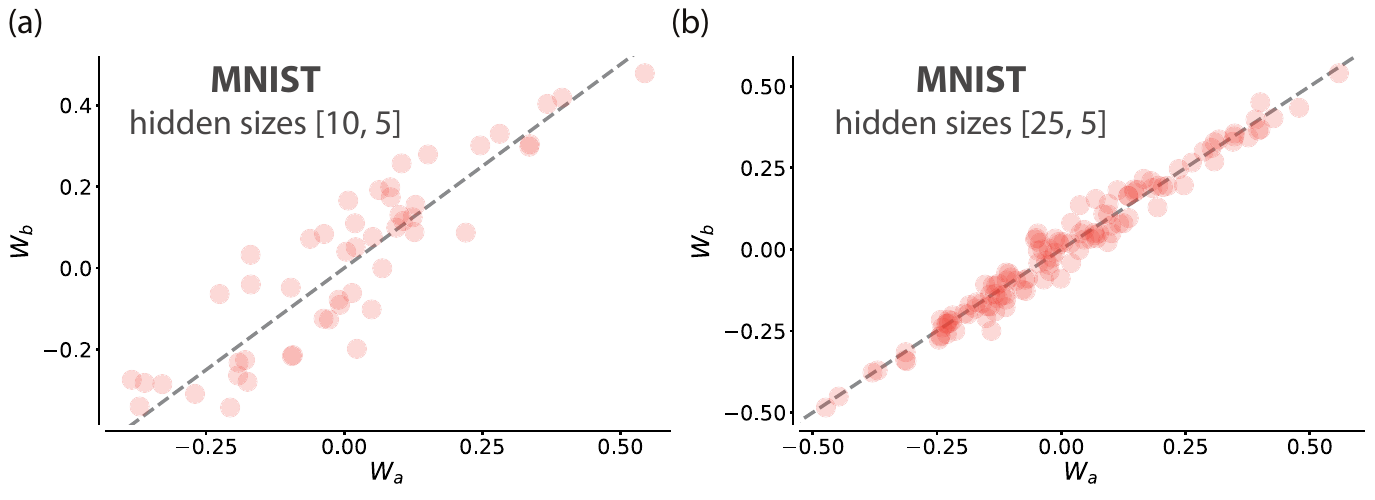}
    \caption{Comparison of feedforward and feedback weights after learning for different network sizes. This is trained on MNIST. (a) There is less symmetry in a smaller network with hidden layers of sizes 10 and 5. (b) There is more symmetry in a larger network, with hidden layers 25 and 5. \label{fig:app:weight_asymmetry}}
\end{figure}

Our algorithm explicitly breaks the symmetry between the weights, leading to solutions that generically  alleviate this difficulty. Indeed, in some cases there is very little correlation between feedforward and feedback weights; see Fig.~\ref{fig:app:weight_asymmetry_mmill}. However, in some cases we see that the forward and backward weights converge to similar values even though the learning rules are not symmetric. We suspect that this can occur in cases where there is a large imbalance between the width of adjacent layers, the larger the imbalance (i.e. the lower rank the wider layer is) the more correlation we see in our experiments; see Fig.~\ref{fig:app:weight_asymmetry}. In biologically relevant neural networks, we do not generally see this large imbalance in the width of adjacent layers and therefore we do not expect to see high correlation between the forward and backward weights.

\subsection{Predictive coding with constraint}
In the main text we considered a predictive-coding network that imposes a covariance constrained but we did not make explicit what this entails. Explicitly, we augmented the Whittington-Bogacz objective, Eq.~\eqref{eq:WB_obj}, with a Lagrange multiplier enforcing an inequality constraint on the covariance of hidden variables, similar to what we do in BioCCPC:
\begin{align}\label{eq:WB_obj_with_cons}
    L_\text{cons} &= \frac12 \sum_{l} \frac{\bigl\| \Z^{(l)}-\W^{(l-1)} \Z^{(l-1)}\bigr\|_F^2}{\sigma^{(l)2}} + \tr \Q^{(l)\top} \Q^{(l)} (\Z^{(l)} \Z^{(l)\top} - T \rho^{(l)} \I),
\end{align}
and optimized using stochastic gradient descent. This lets us distinguish the effect of the constraint from the effect of the approximations we used in the derivation.

\subsection{Details of simulation runs}
We initialized our the weights in our networks using uniform random numbers in the range $[-a, a]$, where $a = 1 / \sqrt{\text{number of columns}}$. This is similar to Xavier initialization~\citeSM{glorot2010understanding} except the scale of initial weights depends only on the number of columns instead of both rows and columns. We have found this initialization to provide slightly better results.

We optimized hyperparameters using Optuna~\citeSM{optuna_2019} with the tree-structured Parzen Estimator (TPE) algorithm~\citeSM{bergstra2011algorithms}. We ran the optimization for 100 trials and repeated it 10 times with different random seeds for every combination of dataset, network architecture, and constraint scale $\rho$ (with the exception of the plain PC algorithm which does not have a constraint). In each case we optimized the learning rate $\eta$ used in the weight updates and the learning rate $\tau^{-1}$ of the fast dynamics (see Algorithm~\ref{alg:constraint_PC}). For networks that have a constraint, we allowed a different learning rate $f \eta$ for the $\Q$ dynamics, and optimized for the factor $f$. Finally, for BioCCPC network, we allowed for different learning rates for the forward and backward weights, $\W_b$ and $\W_a$, and optimized for the ratio between these learning rates. Hyperparameter optimization runs were trained for 500 batches with batch size 100 and were evaluated on a held-out validation set comprising 500 samples. Each run was repeated 4 times with different random initialization to account for stochasticity in the learning dynamics.

The simulation runs that were used to make the figures in the paper were based on the hyperparameter optimization results that yielded the lowest predictive-coding loss. Since hyperparameter optimization tends to push the learning rate up to the brink of instability, the learning rate used in the simulations was lowered by multiplying by a certain ``safety'' factor $\alpha$. We generally used $\alpha = 0.8$.

We ran our simulations on the CPU. Each run took from 1 to 4 minutes when run single-threaded on a 24-core Intel Xeon CPU E5-2643 at 3.40GHz running Linux. We also used a cluster to run the hyperparameter optimization for different architectures in parallel.

\section{Generalization error}
\begin{figure}[!tbh]
    \center\includegraphics{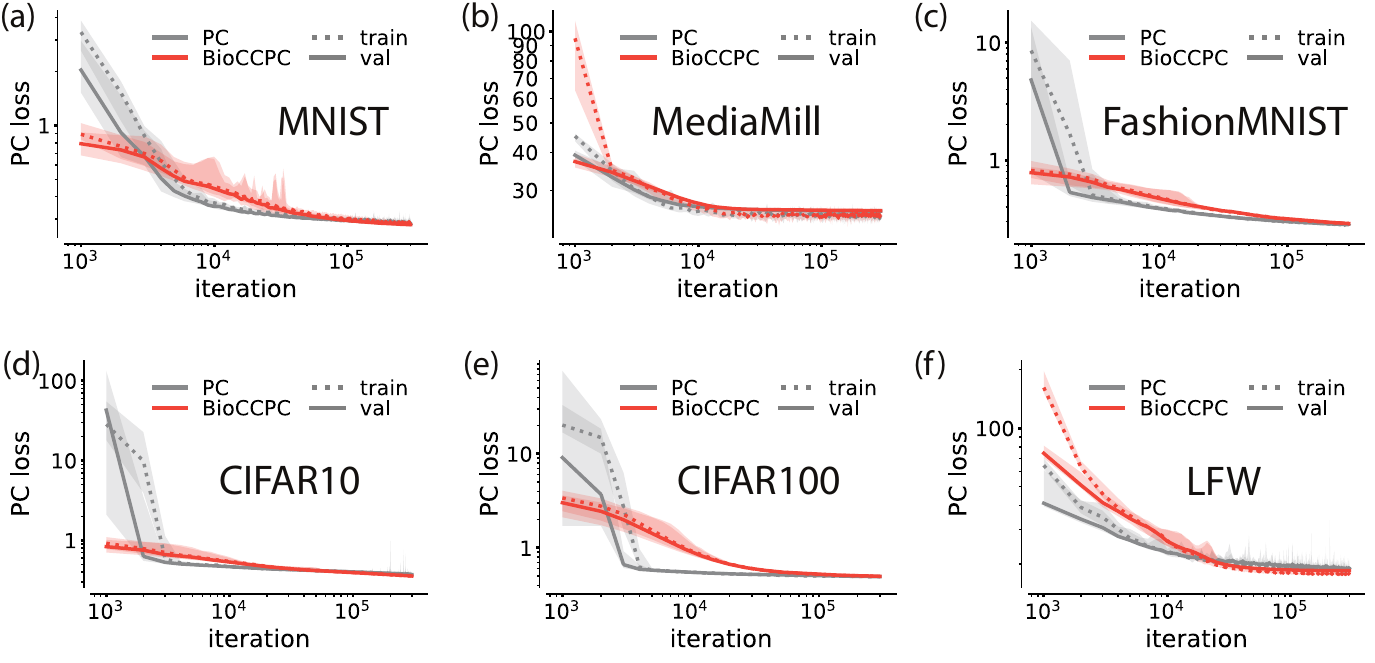}
    \caption{There is little overfitting in the linear setting that we are investigating. The plots compare the training- and validation-set predictive-coding losses for both the less biologically plausible predictive-coding implementation from~\cite{Whittington2017} (PC) and for our algorithm (BioCCPC).\label{fig:app:generalization}}
\end{figure}
The linear constraint that we impose on our network is a strong regularizer. Figure~\ref{fig:app:generalization} compares the training and validation losses during training on six different datasets (as described above). We see that there is very little overfitting in all cases.

\section{Addition of a} \label{app:addition}

:
\begin{align*}
     \min_{\Z^{(l)},\W_a^{(l)},\W_b^{(l)}} \hat L^{(l)}  \leq&  \min_{\substack{\Z^{(l)},\W_a^{(l)},\W_b^{(l)} \\ \Z^{(l)} \Z^{(l)\top} \preceq T\times\I}}\hat L^{(l)} \\ \leq& \min_{\substack{\Z^{(l)},\W_a^{(l)},\W_b^{(l)} \\ \Z^{(l)} \Z^{(l)\top\preceq T\times\I}}}  \frac 12\biggl[g_b^{(l)} \Bigl\|\Z^{(l)}-\W_b^{(l-1)}\Z^{(l-1)}\Bigr\|_F^2 
    + c^{(l)} \bigl\|\Z^{(l)}\bigr\|_F^2 \\
    &\qquad+ g_a^{(l)}   \tr \Bigl(- 2 \Z^{(l+1)\top}\W_a^{(l)}\Z^{(l)}\Bigr)  + g_a^{(l)}  T \, \W_a^{(l)\top} \W_a^{(l)}
\end{align*}
The first inequality results from the fact that the parameters of the new optimization problem are a subset of the parameters of the original optimization problem. In the second inequality we have expanded the second term of $\hat L^{(l)}$ and used the constraint $\frac 1T \Z^{(l)} \Z^{(l)\top} \preceq \I$ to replace $\Z^{(l)} \Z^{(l)\top}$ with~$T$, making the term larger, resulting again in an upper bound.

\bibliographystyleSM{unsrt}
\bibliographySM{biblio}

\end{document}